\theoremstyle{theorem}
\newtheorem{theorem}{Theorem}[section]
\newtheorem{lemma}{Lemma}[section]
\newtheorem{corollary}{Corollary}[section]
\newtheorem{proposition}{Proposition}[section]
\theoremstyle{remark}
\newtheorem{remark}{Remark}[section]
\theoremstyle{example}
\theoremstyle{definition}
\newtheorem{definition}{Definition}[section]
\numberwithin{equation}{section}
\newcommand{\beq}{\begin{eqnarray}}
\newcommand{\eeq}{\end{eqnarray}}
\newcommand*{\colorboxed}{}
\def\colorboxed#1#{%
  \colorboxedAux{#1}%
}
\newcommand*{\colorboxedAux}[3]{%
  \begingroup
    \colorlet{cb@saved}{.}%
    \color#1{#2}%
    \boxed{%
      \color{cb@saved}%
      #3%
    }%
  \endgroup
}
\def\d{\mathfrak d}
\def\g{\mathfrak g}
\def\r{\mathfrak r}
\def\D{\mathfrak{D}}
\newsavebox{\@brx}
\newcommand{\llangle}[1][]{\savebox{\@brx}{\(\m@th{#1\langle}\)}%
  \mathopen{\copy\@brx\kern-0.5\wd\@brx\usebox{\@brx}}}
\newcommand{\rrangle}[1][]{\savebox{\@brx}{\(\m@th{#1\rangle}\)}%
  \mathclose{\copy\@brx\kern-0.5\wd\@brx\usebox{\@brx}}}
\def\R{\mathbb{R}}
\def\bbC{\mathbb{C}}
\def\id{\mathrm{id}}
\newcommand{\cB}{{\cal B}}
\newcommand{\cJ}{{\cal J}}
\newcommand{\ovo}{{_{({-1})}}}
\begin{document}

\title{Integrability from categorification and Kac-Moody 2-algebras}

\author[1,2]{{ \sf Hank Chen}}\thanks{chunhaochen@bimsa.cn, c398chen@uwaterloo.ca}
\author[1]{{\sf Florian Girelli}}\thanks{fgirelli@uwaterloo.ca}

\affil[1]{\small Department of Applied Mathematics, University of Waterloo, 200 University Avenue West, Waterloo, Ontario, Canada, N2L 3G1}

\affil[2]{\small Beijing Institute of Mathematical Sciences and Applications, Beijing 101408, China}

\maketitle

\begin{abstract}
     The theory of Lie bialgebras and the classical Yang-Baxter equation plays a major role in the study of 1+1d integrable systems; many families of integrable systems can be recovered from a Lax pair which is constructed from a Lie bialgebra. A categorified, higher homotopy notion of Lie algebras has been studied, which gave rise to the notion of (strict) Lie 2-bialgebras  (Lie algebra crossed-modules) and the 2-graded Yang-Baxter equations. In this paper, we use these differential graded structures to generalize the construction of a Lax pair and introduce an appropriate notion of higher-dimensional integrability. Within this framework, we introduce a higher derived version of the affine Kac-Moody algebra, which underpins the 2-graded Lax integrability that we have developed here as a zero 2-curvature condition. As an explicit demonstration, we will consider a 3d field theory and show that it (i) is 2-graded Lax integrable and (ii) hosts symmetries governed by a Kac-Moody 2-algebra. 
\end{abstract}

\tableofcontents

\newpage

\section{Introduction}
It is well-known since the 19th century that physical classical systems can be described by a {\it Poisson structure} on the space $M$ of configurations, which is a bilinear skew-symmetric map $\{-,-\}$ called the {\it Poisson bracket} on the space $C^\infty(M)$ of smooth functions on $M$ satisfying the Leibniz rule and the Jacobi identity. 
The classical dynamics, under a chosen {\it Hamiltonian} function $H\in C^\infty(M)$, of the observables $f\in C^\infty(M)$ in the system is governed by the Poisson bracket on $M$, through the following differential equation 
\begin{equation}
    \dot f = \{H,f\},\nonumber
\end{equation}
with some appropriate initial conditions.  The Poisson structure also serves as the precursor to canonical Dirac quantization, in the sense that $\{-,-\}$ contributes to the quantum commutator $[-,-]$ to first order in the quantum deformation parameter (usually denoted by $\hbar$), hence Poisson geometry in general plays a central role in the understanding of various physical systems, both classical and quantum.

A manifold $M$ equipped with a Poisson structure $\{-,-\}$ together with the choice of a Hamiltonian $(M,\{-,-\}$, $H)$ is called a {\it Hamiltonian system}. A particularly special class of Hamiltonian systems has "enough" conserved quantities --- namely functions $f$ with $\dot f=0$ --- such that the dynamics of $H$ can be separately described completely on the level sets of these conserved quantities. More precisely, we call $(M,\{-,-\},H)$ {\it (completely) integrable} if there exist $n$-number of constants of motion $f_1,f_2,\dots,f_n$ in involution, namely $\{f_i,f_j\}=0$ for all $1\leq i,j\leq n$. If the Poisson structure is in fact non-degenerate, then this would imply $2n =\operatorname{dim}M$. This is because such a system $\{f_i\}_i$ of conserved quantities partitions the Poisson manifold into leaves that are invariant under the dynamics generated by $H$.

Finding constants of motion is not an easy task, and integrable systems in general are difficult to characterize. To get a handle on them, the theory of {\it Lax pairs} had been developed \cite{Lax:1968fm,Bobenko:1989,book-integrable}. A Lax pair $(L,P)$ is a tuple of maps $L,P:M\rightarrow\mathfrak{g}$ into a Lie algebra $\mathfrak{g}$ such that, on an integral curve $\gamma:[0,1]\to M$ generated by the vector field $X_H= \{H,-\}$, the {\bf Lax equation} (here $t$ is the coordinate on $[0,1]$)
\begin{equation} 
    \partial_t({\gamma^*L}) = \gamma^* \{H,L\} = [\gamma^*L,\gamma^*P]
\end{equation}
holds, where $\gamma^*:C^\infty(M)\to C^\infty([0,1])$ is the pullback. If the Hamiltonian system $(M,\{-,-\},H)$ admits a Lax pair,  this Lax equation implies that the following \textit{trace polynomials} on any representation space $V$ of $\g$
\begin{equation}
    f_{V,m} = \operatorname{tr}_V \rho(L)^m,\qquad \rho: \g\rightarrow \mathfrak{gl}(V)\nonumber
\end{equation}
are constants of motion \cite{Meusburger:2021cxe} for any $m$, which are in involution provided the correct Poisson bracket is suitably defined (see eg. \eqref{dualpois} later). 

In particular, the eigenvalues of $L$ are conserved quantities, hence for $\g$ of sufficiently large rank, $(M,\{-,-\},H)$ can be made integrable \cite{book-integrable}. By the trace-determinant relation $\exp \circ \operatorname{tr} = \operatorname{det}\circ \exp$, these trace polynomials $f_{V,k}$ can be summed into the {\it monodromy determinants} \cite{book-integrable,Babelon_1992,Adler:1978,book-cft}
\begin{equation}
    M_V = \operatorname{det}_V\tilde\rho(e^L) = \sum_{k=0}^\infty \frac{1}{k!}f_{V,k}\label{mondrom}
\end{equation}
labeled by (irreducible) representations $\tilde\rho: G\rightarrow\operatorname{GL}(V)$ of $G$ corresponding to $\rho$. 

Finding a Lax pair on a general Hamiltonian system, if one even exists, is itself a difficult problem. However, if we are given a Poisson-Lie group $(G,\Pi)$, its corresponding bialgebra $\g$ \cite{Semenov1992} and   a Poisson map $\cJ:M\rightarrow \g^*$, then we can pull back the Poisson map $\cJ$, namely perform a particular change of canonical coordinates, to  achieve a Lax pair for the induced Hamiltonian system on $M$.

Indeed, from the bialgebra $\g$, we can infer a Poisson structure $\{-,-\}^*$  on the function algebra $C^\infty(\g^*)$ of the dual $\g^*$ Lie algebra for which a Lax pair $(L,P)$ can be {\it canonically} constructed for any invariant Hamiltonian $H\in C^\infty(\g^*)$, see for example a nice summary of the construction  in  \cite{Meusburger:2021cxe}. 

Many physical integrable systems, such as the classical Toda, Korteweig-de Vries and the Kadomtsev–Petviashvili hierarchies \cite{book-integrable}, as well as the XXX/XXZ/XYZ family of quantum spin chains \cite{Zhang:1991}, can be transformed in this way to the canonical integrable system on $\g^*$ for certain Lie bialgebras $\g$. Moreover, 1+1d integrable field theories such as the Wess-Zumino-Witten/principal chiral models and their various integrable deformations \cite{KLIMCIK1995455,KLIMCIK1996116,Hoare:2021dix,Mohammedi:2020qok} are well-known to also host Lax pairs that generically have singular dependence on the \textit{spectral parameter} $z\in\bbC$. However, these are all 1d or 1+1d models, and it is generally a difficult task to identify the notion of integrability for higher dimensional systems; see for example for some proposals \cite{Maillet:1989gg,Larsson:2002pk}. 

It is expected that increasing the  dimensionality of a system corresponds to a "categorification" of the relevant structure; this is the "categorical ladder = dimensional ladder" proposal \cite{Crane:1994ty, Baez:1995xq, Mackaay:ek,Pfeiffer2007}.   Kapranov and Voevodsky used this proposal to categorify the notion of vector space in order to find higher dimensional version  of the Yang-Baxter equations \cite{Kapranov:1994}, called the Zamolodchikov tetrahedron equations. While not providing obvious concrete integrable models, the Kapranov-Voevodsky approach has been very influential in the study of higher categorical structures (see for instance \cite{neuchl1997representation,Baez:2012,GURSKI20114225,KongTianZhou:2020,Johnson_Freyd_2023,Delcamp:2021szr,Delcamp:2023kew,decoppet2022morita} for a short list of developments). 

Apart from the Kapranov-Voevodsky approach, other routes toward categorification include {\it Soergel bimodules} in Khovanov's knot categorification \cite{Khovanov:2000,rouquier:hal-00002981,Rouquier2005CategorificationOS}, as well as the Baez-Crans higher Lie algebras that appear as vertical categorifications of Lie algebras \cite{Baez:2003fs,Baez:2004in,baez2004,Nikolaus_2014}. These higher Lie algebras --- or more precisely $L_n$-algebras and $L_\infty$-algebras \cite{Stasheff:1963} --- can be understood as derived/higher homotopy generalizations of Lie algebras, and they help to climb the dimensional ladder by defining a higher-dimensional notion of global \cite{Benini:2018reh,Cordova:2018cvg} and gauge \cite{Baez:2004in,Wockel2008Principal2A} symmetries. 

\medskip 

Higher-gauge theories based on $L_n$-algebras had also recently been used to investigate higher-dimensional integrable structures \cite{Schenkel:2024dcd,Chen:2024axr}, in which Lax higher-connections were found for various field theories. The current paper investigates the underlying classical integrability framework for such theories in 3-dimensions. The connection of this framework with solutions of the tetrahedron equations (see \textit{Remark \ref{zamTE}}) is, unfortunately, still unknown. This is the subject of a future work. 

\subsection*{Outline and main results}
The paper is organized as follows. In \S \ref{poisbialg}, we review some key aspects of the theory of Lie (1-)bialgebras and (1-)Lax integrability, following \cite{Meusburger:2021cxe,book-integrable}. We demonstrate how the Kac-Moody algebra $\widehat{\g_k}$ can be used to write the Lax equation as a zero-curvature condition in 1+1d, and illustrate the integrable properties of the celebrated 2d Wess-Zumino-Witten conformal field theory in this framework. 

The rest of the paper is divided into two parts, with the former setting up the latter.
\begin{enumerate}
    \item {\bf First main result, \S \ref{2integra}:} after reviewing Lie 2-(bi)algebras, Poisson differential graded (dg) manifolds and the classical 2-Yang-Baxter equation  (2CYBE) in \S \ref{poislie2grp} following   \cite{Bai_2013,chen:2022,pantev2013shifted,pridham2018outline,Chen:2012gz}, we define a notion of {\it 2-graded Lax integrability}, and study its conserved quantities. We then explicitly construct a 2-graded \textit{Lax pair} from solutions of the 2CYBE. 
    \item \textbf{Second main result, \S \ref{zero2curv} and \S \ref{2lax2km}:} taking the de Rham complex $\Omega^\bullet(\Sigma)$ of a 2d oriented surface $\Sigma$, we explicitly construct the associated higher Kac-Moody algebra "$\widehat{\Sigma_s\g}$" as a centrally-extended infinite-dimensional dg Lie algebra. This object is then used to write the 2-Lax equation defined in \S \ref{2integra} as a zero 2-curvature condition in 2+1d. As a concrete example, we establish the 2-Lax integrability of the 3d field theory $\mathcal{W}$ derived in \cite{Chen:2024axr}, and exhibit its $\widehat{\Sigma_s\g}$-symmetry. 
\end{enumerate}
The theory $\mathcal{W}$ can be understood as a higher-dimensional analogue of the {principal chiral model}.

We emphasize here that our construction in \S \ref{zero2curv} can be extended to take generic differential-graded commutative algebras (dgca's) $\mathcal{A}$ as input, not just $\Omega^\bullet(\Sigma)$. The choice of the dgca $\mathcal{A}$ depends on the higher-dimensional field theory under study; see \S \ref{2laxintheift} and \textit{Remark \ref{raviolo}}.

\medskip

Our conceptualization of higher current algebras fits with closely related work \cite{Gwilliam:2018lpo,Cirio:2012be,Alfonsi_2023,Garner:2023zqn} in the literature. There is also another notion named "2-Kac Moody algebra", which was studied by \cite{Rouquier20082KacMoodyA,Khovanov_2010} in relation to knot categorification. It is not obvious, though is a very interesting prospect to investigate, how derived current algebra considerations are related to knot homology considerations.

\subsubsection*{Statements and conflicts of interest declaration} 
There are no conflicts of interest to declare at this time. 

\subsubsection*{Data availability statement}
There are no data associated with this work.

\subsubsection*{Acknowledgments}
The authors would like to thank Joaquin Liniado, Nicolas Cresto and Benoit Vicedo for fruitful and enlightening discussions for the completion of this manuscript. HC is supported by the National Science Foundation of China (Grant Number: W2533012).

\section{Integrable systems from Lie bialgebras: a review}\label{poisbialg}
We will review first how to construct a Lax pair by picking a Lie algebra $\g$ and a classical $r$-matrix. We then recall the construction relevant to the Wess-Zumino-Witten model. 

\subsection{Lax pair}
In this section, we first recall how $r$-matrix associated to a Lie algebra $\g$ can be used to define a Lie algebra 1-cocycle, which upon integration will give rise to a Poisson bivector on the group $G$. We will then use this Poisson structure to construct the Lax pair.  

Let $\g$ denote a Lie algebra over a field $k$ of characteristic zero. Recall that a quasitriangular classical $r$-matrix $r\in \mathfrak{g}^{\otimes 2}$ on $\mathfrak{g}$ satisfies the {\it classical Yang-Baxter equation}
\begin{equation*}
    \llbracket r,r\rrbracket = [r_{12},r_{13}] +[r_{12},r_{23}]  + [r_{13},r_{23}]=0,
\end{equation*}
where $\llbracket-,-\rrbracket$ is the Schouten bracket (by treating $\g$ as left-invariant vector fields on $G$). Further, by splitting  $r= r^\wedge+r^\odot$ into skew-symmetric $r^\wedge\in\mathfrak{g}^{\wedge 2}$ and symmetric $r^\odot\in\mathfrak{g}^{\odot 2}$ components, we have
\begin{equation}
    \llbracket r^\wedge,r^\wedge\rrbracket = -\llbracket r^\odot,r^\odot\rrbracket\nonumber.
\end{equation}

The classical Yang-Baxter equation imposes the $\operatorname{ad}$-invariance of the symmetric part $r^\odot$, which makes it into a \textit{quadratic Casimir} --- namely an invariant symmetric quadratic form on $\mathfrak{g}$. Non-degenerate elements therein determine on $\mathfrak{g}$ a non-degenerate invariant symmetric bilinear form $ \langle-,-\rangle_{\odot}$, up to scalar factor. We identify the spaces $\g\cong\mathfrak{g}^*$ with this non-degenerate bilinear form by $X\mapsto g_X = \langle -,X\rangle_\odot\in\mathfrak{g}^*$ for each $X\in\mathfrak{g}$. For $\g$ semisimple, the pairing $\langle-,-\rangle_\odot$ can be fixed by choosing $r^\odot$ to be given by a  Killing form. 

\medskip 

On the other hand, the skew-symmetric part $r^\wedge$ defines a 1-cocycle $\psi \in Z^1(\mathfrak{g},\mathfrak{g}^{\wedge 2})$ given by
\begin{equation}
    \psi X = (\operatorname{ad}_X\otimes 1+ 1\otimes\operatorname{ad}_X)r^\wedge,\qquad X\in\g,\nonumber
\end{equation}
and induces the Lie bracket $[-,-]_*$ on $\mathfrak{g}^*$ defined with respect to the pairing form
\begin{equation}
    \langle [g,g']_*,X\rangle = \langle g\wedge g',\psi X\rangle, \quad g,g'\in\g^*.\nonumber
\end{equation}
Now for the pairing form $\langle-,-\rangle_\odot$ induced by a quadratic Casimir $r^\odot$, the skew-symmetric part $r^\wedge$ can be used to define a map (here  $\{T_i\}_i$ is a basis of $\mathfrak{g}$) $$\varphi:\mathfrak{g}\rightarrow\mathfrak{g},\qquad X\mapsto  (r^\wedge)^{ij}\langle X,T_i\rangle T_j\,.$$ If $r=r^\odot+r^\wedge$ satisfies the classical Yang-Baxter equations \eqref{cybe}, then the followin
\begin{align}
    [g_X,g'_X]_*& = \langle-,[\varphi(X),X'] + [X,\varphi(X')]\rangle_\odot,\qquad X,X'\in\mathfrak{g}\label{algmap}
\end{align}
for $g_X=\langle-,X\rangle\,,\,g'_X=\langle-,X'\rangle$ defines a Lie algebra structure on $\g^*$. This allows to identify yet another Lie bracket $[-,-]_r = [-,-]\circ(\varphi\otimes 1+1\otimes\varphi)$ on $\mathfrak{g}$, which we now use in \eqref{dualpois} to define a particular Poisson bracket on functions on $\g^*$.

\medskip

If we use the non-degenerate pairing $r^\odot$ to identify the cotangent spaces of $\mathfrak{g}^*$ with $\g$, we induce a Poisson bracket
\begin{equation}
    \{\phi,\phi'\}^*_r(g) = \langle g,[d_g\phi,d_g\phi']_r\rangle, \quad \phi,\phi'\in C^{\infty}(\g^*),\quad g\in \g^*\label{dualpois},
\end{equation}
called the \textit{Kirillov-Kostant $r$-bracket} on $C^\infty(\mathfrak{g}^*)$ \cite{Meusburger:2021cxe}, where the bracket $[-,-]_r$ is defined with the map $\varphi$ as in \eqref{algmap}. The quadratic Casimir $r^\odot$ is invariant under the coadjoint action of the Lie group $G$ on $\mathfrak{g}^*$,
\begin{equation}
    \langle\operatorname{Ad}^*_x g,X\rangle = \langle g,\operatorname{Ad}_{x^{-1}}X\rangle,\nonumber
\end{equation}
where $x\in G$ and $\operatorname{Ad}$ is the adjoint representation of $G$ on $\mathfrak{g}$.

This then allows us to construct invariant Hamiltonians $H$ on $\g^*$ from a non-degenerate $r^\odot$. This is significant, as the Hamiltonian system $(\mathfrak{g}^*,\{-,-\}^*,H)$ is automatically integrable (see \textbf{Theorem 4} of \cite{Meusburger:2021cxe}). Indeed, it admits a Lax pair $(L,P):\mathfrak{g}^*\rightarrow\mathfrak{g}$ satisfying 
\begin{equation}
    \partial_t(\gamma^*L)= \gamma^*\{H,L\}_r^* = [\gamma^*L,\gamma^*P]\label{lax},
\end{equation}
with  (recall repeated indices are summed) 
\begin{equation}
    L: g\mapsto (r^\odot)^{ij}\langle g,T_i\rangle T_j ,\qquad P:g\mapsto \varphi(d_gH) = (r^\wedge)^{ij}\langle T_i,d_gH\rangle T_j\label{laxLcond}
\end{equation}
in terms of the basis $\{T_i\}_i$ of $\mathfrak{g}$. Moreover, the Lax operator $L$ is "fundamental" in the sense that it also satisfies
\begin{equation}
    \{L,L\}^*_r = [L\otimes 1+ 1\otimes L,r^\wedge].\nonumber
\end{equation}
The trace polynomials of $L$ thus define conserved quantities, which are in involution with respect to the $r$-bracket thanks to the classical Yang-Baxter equations.

This construction of a canonical Lax pair on the Hamiltonian system $(\mathfrak{g}^*,\{-,-\}^*,H)$ is the key result that will be generalized in \S \ref{2integra}.

\subsection{Zero curvature representation of the Lax equation}\label{wzw}
An interesting fact regarding the Lax equation \eqref{lax} is that, in certain situations, it can be represented as a {zero curvature}
\begin{equation*}
    \dot A_x - \partial_x A_t + [A_t,A_x] = 0
\end{equation*}
for a connection $A = A_t dt+A_x dx$ in 1+1d. The easiest way to see this is to introduce the so-called \textbf{Kac-Moody current algebra} $\widehat{\g_k}$ of a simple Lie algebra $\g$ (over $\mathbb{C})$. This object is linearly isomorphic to the direct sum $\g((z))\oplus k\mathbb{C}$, where $\g((z)) = \g\otimes \bbC((z))\cong C_{\partial}(D^\times,\g)$ is the space of $\g$-valued formal Laurent series in $z$, or equivalently the space of $\g$-valued holomorphic functions on the formal punctured disc $D^\times$. 

\begin{remark}
    It would be useful for later to consider the Kac-Moody current algebra $\widehat{\g_k}\cong\widehat{\Omega_k\g}$ as a central extension of the \textit{base loop algebra} $\Omega\g=C(S^1,\g)$, namely the algebra of $\g$-valued functions on $S^1\subset \bbC$ \cite{Baez:2005sn,book-integrable}. The Laurent series $\bbC((z))$ are then realized as the Fourier modes on $S^1$ in this formulation. When we lift the Lax equation to the higher homotopy context in \S \ref{2lax2km}, this perspective connects the results of this paper with the derived current algebras that has appeared previously in the literature \cite{Garner:2023zqn} (see also \cite{FAONTE2019389,Kapranov2021InfinitedimensionalL}).
\end{remark}

We now review some basic facts about the Kac-Moody current algebra/affine Lie algebra $\widehat{\g_k} = \g((z))\oplus k\bbC$, following \cite{book-integrable,KAC197885,Baez:2005sn}. It has the following Lie algebra structure
\begin{align}
    [X\oplus \xi,X'\oplus \xi'] &= [X,X'] \oplus 2k\oint_{S^1}\langle X,\partial_zX'\rangle \nonumber\\ 
    &= \big([\mathsf{x},\mathsf{x}']\otimes (fg)(z)\big) \oplus 2k\langle \mathsf{x},\mathsf{x}'\rangle\operatorname{Res}_{z=0}(f\partial_z g),\label{kmalg}
\end{align}
where $X=\mathsf{x}\otimes f(z)$ and $X'=\mathsf{x}'\otimes g(z)$ are elements in $\g((z))$ and $\langle-,-\rangle$ denotes the Killing form on $\g$. In terms of the Laurent series, the residue can be understood as the Cauchy contour integral
\begin{equation*}
    \operatorname{Res}_{z=0}(f\partial_z g) = \frac{1}{2\pi i}\int \frac{dz}{z} f(z)\partial_z g(z)
\end{equation*}
over the formal punctured disc $D^\times$.\footnote{The loop algebra can be realized concretely in terms of $\g$-valued Fourier modes on the circle $S^1.$ To see this, we let $z=e^{i\theta}$ whence we can rewrite the residue $$\operatorname{Res}_{z=0}(f(z)z^n) = \frac{1}{2\pi }\oint_{S^1}d\theta f(\theta)e^{in\theta} = \hat f_n$$ in terms of the Fourier coefficient for square-integrable functions $f$ on $S^1$. This also makes it clear that the residue satisfies the following orthonormality relation $\operatorname{Res}_{z=0}z^{n+m} = \delta_{n+m,0}$.}

This structure can be understood as arising from a Lie algebra central extension 
\begin{equation*}
    k\mathbb{C}\rightarrow \widehat{\g_k}\rightarrow \g((z))
\end{equation*}
given by a Lie algebra 2-cocycle $k\in Z^2(\Omega\g,\mathbb{C})$ given by \cite{Baez:2005sn}
\begin{equation*}
    \mathsf{k}(X,X') = \frac{k}{2\pi i} \int\frac{  dz}{z}\langle X,\partial_zX'\rangle,
\end{equation*}
where $k\in\mathbb{C}^\times$ is now interpreted as a parameter called the \textit{level}. 

\begin{remark}\label{derivative}
    It is sometimes convenient to append an additional generator $D$ to the affine Kac-Moody algebra that implements the differentiation $[D,X] = \frac{d}{dx}X$ along the angular coordinate $x$ on $S^1$ \cite{book-integrable}, such that $\widehat{\g_k}\cong\Omega\g\oplus \lambda\bbC\oplus D\bbC$. This will not be necessary in this paper, however.
\end{remark}

The Kac-Moody algebra has also equipped a pairing induced from the Killing form,
\begin{equation}
    \langle X\oplus \xi,X'\oplus \xi'\rangle = \frac{1}{2\pi i}\oint_{S^1}dz \langle X,X'\rangle +\xi\xi',\label{kmpairing}
\end{equation}
from which the defining relation
\begin{equation*}
    \langle \operatorname{ad}_{X\oplus \xi}^*X''\oplus\xi'',X'\oplus \xi'\rangle =-\langle [X\oplus \xi,X'\oplus \xi'],X''\oplus\xi''\rangle
\end{equation*}
for the coadjoint action $\operatorname{ad}^*$ gives rise to the following expression
\begin{align}
    \operatorname{ad}^*_{X\oplus\xi}X'\oplus \xi' &= [X,X'] -i\xi'z\partial_z X \nonumber\\ 
    &= [\mathsf{x},\mathsf{x}']\otimes f(z)g(z) + \xi' \mathsf{x}\otimes\partial_xf, \label{coad1}
\end{align}
where $x\in S^1$ is the angle variable on the circle for which $-iz\partial_z = \partial_x$. 

Now suppose $(L\oplus 1,P\oplus 0)$ denotes a Lax pair valued in the Kac-Moody algebra $\widehat{\g_k}$. Taking $M=S^1\times\mathbb{R}$, for any integral curve $\gamma:[0,1]\to M$ generated by the timelike vector $\partial_t$ we have the Lax equation
\begin{equation*}
    \dot{L}=\partial_t L = \operatorname{ad}_P^*L\oplus 1 = [P,L] + \partial_xP
\end{equation*}
in accordance with \eqref{coad1}. If we now take $A_x = L, A_t = P$, then the Lax equation reads as
\begin{equation*}
    \partial_tA_x - \partial_xA_t = [A_t,A_x],
\end{equation*}
which is precisely the flatness equation for a $G$-connection on $S^1\times \mathbb{R}$.

Notice the splitting of the Laurent series into Taylor series in $z,z^{-1}$ is coisotropic with respect to the above pairing form,
\begin{equation}
    \langle -, -\rangle|_{\g[[z]]\otimes \g[[z]]} =0,\qquad \langle -,-\rangle|_{z^{-1}\g[[z^{-1}]]\otimes z^{-1}\g[[z^{-1}]],} =0.\label{kmcosio}
\end{equation}
Indeed, it is known that the Kac-Moody algebra is closely related to quantum groups \cite{Alekseev:1992wn,Tolstoy2002FromQA}, and in fact forms a Manin triple with respect to the cosiotropy splitting \eqref{kmcosio} \cite{Freidel:2019ees}. This is important, as the two linear projection maps $P_\pm: \mathfrak{d}\to \g_\pm$ of a Manin triple $\mathfrak{d} = (\g_+,\g_-,\langle-,-\rangle)$ give rise to solutions of the classical Yang-Baxter equation \cite{Semenov1992,Semenov-Tyan-Shanskii1983-ti}
\begin{equation}
        r = P_+ - P_-\,,\label{doubleCYB}
\end{equation}
and hence give rise to examples of classical integrable systems.

In \S \ref{zero2curv}, we will define the notion of a "Kac-Moody 2-algebra" suitable on 2d surfaces, such that the 2-graded Lax equation can be rewritten equivalently as a zero 2-curvature equation for a 2-connection \cite{Zucchini:2021bnn,Chen:2022hct}. 

\subsection{Lax formulation of the 2d Wess-Zumino-Witten model}
The Kac-Moody algebra plays a central role in the {\it Wess-Zumino-Witten model} \cite{WESS197195,WITTEN1983422,book-cft}. This is a 2d conformal sigma model, with the field $g:M^2\rightarrow G$ defined on a 2d oriented Riemannian manifold $M^2$ valued in a compact simple Lie group $G$. The action is given by
\begin{equation*}\label{actionW}
    W[g] = k\int_{M^2} \langle g^{-1}dg,\star_2 g^{-1}dg\rangle + \frac{k}{4}\int_{B^3}\langle \tilde g^{-1}d\tilde g,[\tilde g^{-1}d\tilde g,\tilde g^{-1}d\tilde g]\rangle,
\end{equation*}
where $\langle-,-\rangle$ is the canonical Killing form and $k$ is called the \textit{level} of the theory. The second term in $W$, called the \textit{Wess-Zumino term}, requires the extension $\tilde g:B^3\rightarrow G$ of $g$ to a 3d contractible manifold $B^3$ with $\partial B^3=M^2$. This term serves to quantize the level $\lambda \in \mathbb{Z}\cong \pi_3G$ \cite{Hoare:2021dix}.

\medskip

Let $(z,\bar z)\in M^2$ denote local complex coordinates on the worldsheet $M^2$. The equations of motion for the Wess-Zumino-Witten model implement the conservation
\begin{equation*}
    \partial_{\bar z}J=0,\qquad \partial_z \bar J=0
\end{equation*}
of the currents $J=k(\partial_z g)g^{-1},\bar J=k g^{-1}(\partial_{\bar z}g)$ given by the right and left Maurer-Cartan forms of $G$. Their conservation equations are equivalent due to the Maurer-Cartan identity
\begin{equation*}
    \partial_{\bar z} (\partial_zgg^{-1}) = \operatorname{Ad}_g \partial_z(g^{-1}\partial_{\bar z} g),
\end{equation*}
and they each define a flat $G$-connection on the worldsheet $M^2$. Hence they each give rise --- in a somewhat trivial manner --- to a Lax pair
\begin{equation*}
    (L,P) = (J_x,J_t),\qquad (\bar L,\bar P) = (\bar J_x,\bar J_t)
\end{equation*}
satisfying the Lax flatness integrability condition. 

Importantly, these Lax currents $J,\bar J$ individually satisfy Poisson brackets given precisely by the Kac-Moody algebra \cite{WITTEN1983422}. This can be derived from the two-sided global symmetry $\delta_\omega g\mapsto g.\omega - \bar\omega. g$ of the Wess-Zumino-Witten model, parameterized by holomorphic/anti-holomorphic symmetry parameters $\omega\in C_{\partial}(M^2,\g),\,\bar\omega\in C_{\bar\partial}(M^2,\g)$. The variational Poisson brackets of the currents \cite{KNIZHNIK198483}
\begin{align*}
    \delta_{\bar \omega}J &= \delta_\omega\bar J =0,\\ s
    \delta_\omega J &= [\omega,J] + k\partial_z \omega,\\
    \delta_{\bar \omega}\bar J&= [\bar\omega,\bar J] + k \partial_{\bar z}\bar\omega,
\end{align*}
give rise to two copies of the Kac-Moody current algebra, with $\lambda$ playing the role of the central charge. 

The infinite-dimensional nature of $\widehat{\g_k}$ then allows one to construct infinitely many monodromy charges (cf. \eqref{mondrom}) from the Lax pairs $(L,P),(\bar L,\bar P)$, labeled by irreducible representations $V,\bar V$ of the \textit{global} Kac-Moody group $\widehat{\Omega_kG}$ (see also \cite{KAC197885}).

The Wess-Zumino-Witten model serves as a simple and straightforward demonstration of some of the concepts involved in Lax integrability. In \S \ref{2lax2km}, we will exhibit the 2-Lax pairs in the 3d topological-holomorphic field theory $\mathcal{W}$ discovered in \cite{Chen:2024axr}, and show that it manifests the integrable properties analogous to those of the Wess-Zumino-Witten model discussed above.

\section{Strict Lie 2-bialgebras and Lie 2-algebra cocycles}\label{poislie2grp}
We now begin discourse on the categorified versions of the above. As such for the rest of this paper, the notation "$\g$" and "$G$" will from now on refer exclusively to Lie 2-algebras/Lie 2-groups.

\subsection{Lie 2-algebras and Lie algebra crossed-modules}
\begin{definition}\label{liexmod}
A Lie algebra crossed-module $(\mathfrak{g},\rhd,[-,-]_0)$ is the data of a pair of Lie algebras $(\mathfrak{g}_{-1}, [-,-]{\ovo})$, $(\mathfrak{g}_0, [-,-]_0)$, a Lie algebra derivation $\rhd:\mathfrak{g}_0 \rightarrow\operatorname{Der}\g_{-1}$ and a Lie algebra homomorphism $t:\g_{-1}\rightarrow \g_0$ (called the \textit{$t$-map}), satisfying the equivariance and Peiffer identities
\begin{equation}
    t(X\rhd Y) = [X,tY]_0,\qquad [Y,Y']{\ovo} = (tY)\rhd Y',\label{pfeif} 
\end{equation}
 for all $X,X',X''\in\mathfrak{g}_0,$ and $\forall Y, Y'\in\mathfrak{g}_{-1}$. 

We shall denote a Lie algebra crossed-module by $\g=(\mathfrak{g}_{-1}\xrightarrow{t}\mathfrak{g}_0,\rhd,[-,-]_0)$ \cite{chen:2022}. 
\end{definition}
\noindent The following are automatic from the above definition,
\begin{eqnarray}
[X,[X',X'']_0]_0+[X',[X'',X]_0]_0+[X'',[X,X']_0]_0=0,\nonumber \\
X\rhd (X' \rhd Y) - X' \rhd (X\rhd Y) - [X,X']_0\rhd Y = 0\label{2jacob}, 
\end{eqnarray}
which one may call "2-Jacobi identities".

On the other hand, it is well-known that Lie algebra crossed-modules are equivalent to {\it strict} $L_2$-algebras \cite{Bai_2013,Wagemann+2021} (the same is true at the group level \cite{Porst2008Strict2A}). This is an equivalence of categories.
\begin{definition}\label{2algebra}
A {\it strict} $L_2$-algebra is a strict 2-term $L_\infty$-algebra. Explicitly, one has a graded space $\g\cong V_{-1}\oplus V_0$ equipped with $n$-ary operations $\mu_n\in\operatorname{Hom}^{2-n}(\g^{n\wedge},\g)$ given by
\begin{equation*}
    n=1:\quad \mu_1:V_{-1}\rightarrow V_0,\qquad n=2: \quad \mu_2=[-,-]:(V_0\oplus V_{-1})\otimes (V_0\oplus V_{-1})\rightarrow (V_0\oplus V_{-1})
\end{equation*}
such that the following \textit{Koszul conditions} are satisfied,
\begin{align*}
 &   [X,X']= - [X',X], \quad [X,Y] = - [Y,X], \quad \mu_1[X,Y]= [X,\mu_1Y], \quad [\mu_1 Y,Y']= [Y,\mu_1Y'],\\
   & [[X,X'],X'']+ [[X'',X],X']+ [[X',X''],X]=0, \qquad [[X,X'],Y]+ [[X,Y],X']+ [X,[X',Y]]=0,
\end{align*}
where $X,X',X''\in V_0$, $Y,Y'\in V_{-1}$.

It is convenient to write the graded bracket $\mu_2=[-,-]: V_i\otimes V_j \rightarrow V_{i+j}$, $-2\leq i+j \leq 0$, in terms of the degree $i,j \mod 2$ of $\g\cong V_{-1}\oplus V_0$, such that
\begin{equation}
    \mu_2(Y+X, Y’+X’)= [X,X']+ \big([X,Y']+ [Y,X']\big),\qquad X,X'\in V_0,~Y,Y'\in V_{-1}.
\end{equation} 
We shall also extend $\mu_1$ to the full space $V_0\oplus V_{-1}$ by $\mu_1(Y+X)=\mu_1Y$. 
\end{definition}

Given a Lie algebra crossed-module $\g=(\mathfrak{g}_{-1}\xrightarrow{t}\mathfrak{g}_0,\rhd,[-,-]_0)$, we simply identify $\g_{-1}=V_{-1},\g_0=V_0$ and $t=\mu_1$. Then, one reassembles the graded bracket $\mu_2$ from the bracket $[-,-]_0$ on $\g_0$ as well as the Lie algebra action $\rhd$ such that
\begin{equation*}
    \mu_2(Y+X, Y’+X’)= [X,X']_0+ \big(X\rhd Y' - X'\rhd Y\big),\qquad X,X'\in V_0,~Y,Y'\in V_{-1}.
\end{equation*}
It is then simple to check that the Lie algebra crossed-module conditions imply precisely the Koszul conditions; in particular, the Peiffer identity implies the Koszul identity
\begin{equation*}
   [\mu_1 Y,Y'] = [tY,Y'] = [Y,Y']\ovo = -[Y',Y]\ovo = -[tY',Y] = -[\mu_1 Y',Y] = [Y,\mu_1Y']
\end{equation*}
as required. Conversely, given a strict $L_2$-algebra, one may recover a Lie algebra crossed-module with the above procedure, provided one {\it defines} the bracket $[-,-]{\ovo}$ on $\g_{-1}$ by
\begin{equation}
    [Y,Y']{\ovo}\equiv [\mu_1Y,Y'],\label{koszulpeiffer}
\end{equation}
whence the Koszul conditions guarantee that this bracket is skew-symmetric and satisfies the Jacobi identity.

Due to this result, we will use "\textbf{Lie 2-algebras}" in the following to refer to both a Lie algebra crossed-module and a strict $L_2$-algebra. All Lie 2-algebras will be understood as \textit{finite-dimensional strict} $L_2$-algebras in this paper, unless otherwise specified.

\subsection{Lie 2-algebra cocycles} 
The following is a reformulation of the conditions "ID1-ID3" in \textbf{Theorem 2.15} of \cite{Chen:2012gz}. See also \cite{Bai_2013,Chen:2013}.
\begin{definition}
The following linear maps
\begin{equation}
    \delta_{-1}:\mathfrak{g}_{-1}\rightarrow \mathfrak{g}_{-1}^{\otimes 2},\qquad \delta_0:\mathfrak{g}_0\rightarrow (\mathfrak{g}_0\otimes\mathfrak{g}_{-1})\oplus(\mathfrak{g}_{-1}\otimes\mathfrak{g}_0) \nonumber
\end{equation}
on the Lie 2-algebra $\mathfrak{g}$ are called a {\bf Lie 2-algebra cocycle} iff the following conditions are satisfied
\begin{align}
 \delta_0 t &= (t\otimes 1 + 1\otimes t)\delta_{-1},\nonumber \\
 0&= (t\otimes 1 - 1\otimes t) \delta_0, \nonumber\\
 \delta_0([X,X']) &= (X\rhd \otimes 1 + 1\otimes \operatorname{ad}_X)\delta_0(X')  \nonumber \\
 &\qquad  - (X'\rhd \otimes 1 + 1\otimes\operatorname{ad}_{X'})\delta_0(X),\qquad \text{($(\operatorname{ad},\rhd)$-invariance)}\nonumber \\
    \delta_{-1}(X\rhd Y) &= (X\rhd \otimes 1 + 1\otimes X\rhd)\delta_{-1}(Y) \nonumber \\
    &\qquad  + \delta_0(X)(\rhd Y\otimes 1 + 1\otimes \rhd Y).\label{2cocycoh}
\end{align}
If $\delta$ satisfies furthermore the following {\it cobracket conditions}
\begin{eqnarray}
    0&=& \sum_{\text{cycl.}}((\delta_{-1}+\delta_0)\otimes 1)\circ \delta_0 = (\delta_{-1}\otimes1)\circ\delta_0 - (1\otimes\delta_0)\circ\delta_0 \nonumber\\
    &\qquad& - ~ (\tau\otimes 1)\circ(1\otimes \delta_0)\circ\delta_0 \nonumber\\
    0&=&\sum_{\text{cycl.}}(\delta_{-1}\otimes 1)\circ \delta_{-1} = (\delta_{-1}\otimes 1)\circ\delta_{-1} - (1\otimes\delta_{-1})\circ\delta_{-1} \nonumber\\
    &\qquad& -~ (\tau\otimes 1)\circ(1\otimes\delta_{-1})\circ\delta_{-1},\label{2cobracket}
\end{eqnarray}
where $\tau:\g\otimes\g\rightarrow\g\otimes\g$ swaps the tensor factors, then we call the data $(\g;\delta)$ a \textbf{Lie 2-bialgebra}.
\end{definition}

\paragraph{Lie 2-algebra structure on the shifted dual complex.}
Now consider the \textit{shifted} dual complex $\g^*[1] = \g^*_0\to \g^*_{-1}$. It comes naturally with a degree -1 canonical evaluation pairing form $\langle-,-\rangle:\g^*[1]\otimes \g\rightarrow k[1]$ given by $$\langle g+f,Y+X\rangle = f(Y)+g(X)$$ for each $g\in \g_0^*,f\in\g_{-1}^*,X\in\g_0,Y\in\g_{-1}$. The dual differential $t^T: \g_0^*\rightarrow \g_{-1}^*$ is determined through
\begin{eqnarray*}
    (t^Tg)(Y) = g(tY)\iff \langle t^T-,-\rangle = \langle -,t-\rangle.
\end{eqnarray*}
Note $\operatorname{deg}(\mathfrak{g}_0^*)=-1$ while $\operatorname{deg}(\mathfrak{g}_{-1}^*)=0$.  

It is easy to see that, by putting
\begin{equation*}
    \langle [f,f']^*,y\rangle=\langle f\otimes f',\delta_{-1}y\rangle,\qquad \langle f\otimes g,\delta_0x\rangle = \langle f\rhd^*g,x\rangle\,
\end{equation*}
and the linear dual $t^T:\mathfrak{g}_0^*\rightarrow\mathfrak{g}_{-1}^*$ of $t$, the dual complex $\g^*[1]$ becomes a Lie 2-algebra thanks to the conditions in \eqref{2cocycoh}.

The following notion is from \cite{Zucchini:2021bnn,chen:2022}.
\begin{definition}\label{balanced}
    We say a Lie 2-algebra $\g$ over a field $k$ of characteristic zero is \textbf{balanced} iff it is equipped with a non-degenerate invariant bilinear form of degree -1,
    $$\g\otimes\g\to k[1]\,.$$
\end{definition}
\noindent A balanced Lie 2-algebra $\g$ therefore comes with which linear isomorphisms $\g_0\cong\g_{-1}^*$ and $\g_{-1}\cong \g_0^*$, which is an important fact for later.

Derived (weak/quasi-weak) generalizations of Lie 2-(bi)algebras --- ie. 2-term $L_\infty$-(bi)algebras --- have also been studied in \cite{Bai_2013,Chen:2012gz}. In fact, more generally, a $n$-term $L_{\infty}$-algebra $L$ can be endowed with a cocycle which is dual to a $L_n$-algebra structure on its \textit{$n$-shifted} dual complex $L^*[n]$. Note that this notion of duality only exist for \textit{bounded} (both above and below) chain complexes.

\paragraph{The graded adjoint action.}

The 2-adjoint representation of $\mathfrak{g}$ on itself, denoted $$\operatorname{ad}:\g\to \operatorname{Der}\g,\qquad Y+X\mapsto ~_2\operatorname{ad}_{Y+X}=[Y+X,-]\,,$$ consists of the following components,
\begin{equation}\label{adjointaction}
    _2\operatorname{ad}=(\operatorname{ad}_0,\operatorname{ad}_{-1}):\mathfrak{g}\rightarrow \operatorname{End}\mathfrak{g},\qquad \begin{cases}\operatorname{ad}_0:\mathfrak{g}_0\rightarrow\operatorname{End}(\mathfrak{g}_0\oplus\mathfrak{g}_{-1}) \\ \operatorname{ad}_{-1}: \mathfrak{g}_{-1}\rightarrow\operatorname{Hom}(\mathfrak{g}_0,\mathfrak{g}_{-1})\end{cases},
\end{equation}
where
\begin{equation}
    \operatorname{ad}_0(X) = ([X,-], X\rhd -),\qquad \operatorname{ad}_{-1}(Y) = -\rhd Y\nonumber 
\end{equation}
for each $X\in\mathfrak{g}_0,Y\in\mathfrak{g}_{-1}$. They satisfy the following key identities
\begin{equation}
    \operatorname{ad}_Xt = t\chi_X,\qquad \operatorname{ad}_{-1}(Y)t = -\operatorname{ad}_Y,\qquad t\operatorname{ad}_{-1}(Y)= - \operatorname{ad}_{tY}\label{adrep}
\end{equation}
for each $X\in\mathfrak{g}_0,Y\in\mathfrak{g}_{-1}$, which come from the equivariance and the Peiffer identity \eqref{pfeif}. 

More details can be found in Example 2.3.7 of \cite{Angulo:2018}.

\paragraph{Poisson-Lie 2-groups.} As was proven in \cite{Chen:2012gz}, a cocycle $\delta$ of a finite-dimensional Lie 2-bialgebra $\g$ integrates to a multiplicative Poisson bivector field $\Pi$ on $G$ given by
\begin{equation*}
    \Pi_{(y,x)} =(L_*)_{y,x)}\hat\delta_{(y,x)},
\end{equation*}
where $\hat\delta$ is the Lie 2-group cocycle corresponding to $\delta$, and $_2\operatorname{Ad}: G\to \operatorname{Aut}\g$ is the Lie 2-group adjoint representation integrating \eqref{adrep} (see the paragraph surrounding \textbf{Definition \ref{def:invariant}}). This makes $(G,\Pi)$ into a Poisson-Lie 2-group. 

\paragraph{The quadratic 2-Casimir.} The following notion can be found in \cite{chen:2022}.
\begin{definition}\label{2casimir}
    A \textbf{quadratic 2-Casimir} $K\in \g\odot\g$ of a Lie 2-algebra $\g$ is a graded-invariant (under \eqref{adjointaction}) graded-symmetric non-degenerate quadratic form on $\g$ of degree 1, which is symmetric under $t$,
    $$(t\otimes 1)K = (1\otimes t)K\,.$$
\end{definition}
\noindent Given a \textit{quadratic 2-Casimir} $K$, one can induce a non-degenerate invariant bilinear pairing 
$\langle-,-\rangle_K:\g\otimes \g\rightarrow k[1]$ of degree -1 for which 
\begin{equation}
    \langle tY,Y'\rangle  = \langle Y,tY'\rangle,\qquad \forall~ Y,Y'\in\g_{-1},\label{tsymm}
\end{equation}
making $\g$ balanced as in \textbf{Definition \ref{balanced}}. This identifies $\g$ with its own dual Lie 2-algebra  $\mathfrak{g}^*[1]$, where the explicit identification is given by $$X\mapsto f_X =\langle -,X\rangle\in\g_{-1}^*,\qquad Y\mapsto g_Y = \langle Y,-\rangle\in\g_0^*,$$ such that $K$ is related to the evaluation pairing via
\begin{equation*}
    (g_Y+f_X)(X'+Y') = f_X(X') + g_Y(Y')= \langle Y+X,Y'+X'\rangle_K,
\end{equation*}
as usual. We will explain in \S \ref{2grclrmat} how quadratic 2-Casimirs are related to solutions of the {2-graded classical Yang-Baxter equations} (\textbf{Definition \ref{2rmatrixdef}}).

\subsection{Differential-graded Poisson structures}\label{2grpois2alg}
We start by recalling the notion of dg shifted Poisson algebras as treated in \cite{pantev2013shifted,Voronov:2008,pridham2018outline,Seol2022-sr,Calaque2015ShiftedPS,SAFRONOV2021107633}. We note that the notion of dg/derived manifolds have a long history since the work of Kontsevich-Soibelman (eg. \cite{Kontsevich:2006jb} and the book draft \cite{Kontsevich_deformation}), but we shall follow more recent treatments.

By a \textbf{differential-graded (dg) manifold} $M$ we will mean a real smooth manifold together with a graded commutative $\mathbb{R}$-algebra $\mathcal{O}_M = \mathcal{O}_M^0 \to \mathcal{O}_M^1\to\dots$ of sheaves $\mathcal{O}_M^j$ of dg commutative (locally-semifree) algebras on $M$. 

The usual smooth functions $\mathcal{O}_M^0$ on $M$ sit at degree-0, and the cochain differential  ${\bf t}: \mathcal{O}^\bullet_M\to \mathcal{O}^{\bullet+1}_M$ is a smooth derivation satisfying the Leibniz rule
\begin{equation*}
    {\bf t}(FG) = ({\bf t} F)G + (-1)^{\text{deg}F}F({\bf t} G)
\end{equation*}
on homogeneous elements $F,G\in \mathcal{O}_M$. Loosely speaking, such structures are also sometimes called \textbf{$\mathbb{NQ}$-manifolds} \cite{Severa:2005}.

We denote by $C(M)^\bullet=\Gamma(M,\mathcal{O}_M^\bullet)$ (or simply $C(M)$) the differential-graded commutative algebra (dgca) of global sections of $\mathcal{O}_X$.
\begin{definition}
    Let $M$ denote a dg manifold. The \textbf{smooth 1-forms $\Omega^1_M$} on $M$ are defined as the $C(M)$-module in (super) chain complexes generated by elements $da$ for $a\in C(M)$ (with the same degree/parity), such that
    \begin{enumerate}
        \item $d(ab) = (da)b + (-1)^{\bar a}a(db)$ where $\bar a$ denotes the parity of $a\in C(M)$,
        \item for $(a_1,\dots,a_n)\in C(M)^0$ in degree 0 (analogues of the coordinate functions on $M$), we have
        \begin{equation*}
            df(a_1,\dots,a_n) = \sum_{i=1}^n \frac{\partial f}{\partial x_i}(a_1,\dots,a_n) da_i\,
        \end{equation*}
        for all $f\in C^\infty(M)$,
        \item $d$ commutes with the cochain differential ${\bf t}$, $d({\bf t}a)= {\bf t}(da)$.
    \end{enumerate}
    The \textbf{smooth 1-vectors} $T_M$ on  $M$ is the dg $C(M)$-module internal-hom $\underline{\operatorname{Hom}}_{C(M)}(\Omega^1_M,C(M))$.
\end{definition}

As an important example, for a finite-dimensional real Lie algebra $\g$ we can consider a dg manifold $M=B\g$ whose underlying manifold is a point, with the sheaf $\mathcal{O}_M$ in cochain complexes given by the Chevalley-Eilenberg complex
\begin{equation*}
    \operatorname{CE}^\bullet(\g) = \R\to \g^*\to \wedge^2 \g^*\to \wedge^3\g^*\,\to\cdots.
\end{equation*}
Similarly, if $\g = \g_{-1}\xrightarrow{t}\g_0 $ were a Lie 2-algebra, then we have a similar construction for the dg manifold $B\g$ with the sheaf $\mathcal{O}_M = \operatorname{CE}^\bullet(\g)$.\footnote{This is in fact one of the most classic examples of a dg manifold, appearing in many classic references such as \cite{Severa:2005,Alexandrov:1995kv,Kontsevich:2006jb,Kontsevich_deformation}.} Each piece of the exterior product $\wedge^n \g^*[1]$ is a cochain complex, which we must totalize in $\operatorname{CE}^\bullet(\g)$. For instance, we have (recall $\g^*[1] =\g^*_0\xrightarrow{t^T} \g_{-1}^*$ is the \textit{shifted} dual complex)
\begin{equation*}
    \g^*[1]\wedge \g^*[1]= \g_{0}^*\odot \g_{0}^*\xrightarrow{D_{t^T}} (\g_0^*\otimes\g_{-1}^*)\oplus (\g_{-1}^*\otimes\g_0^*)\xrightarrow{D_{t^T}} \g_0^*\wedge\g_0^*
\end{equation*}
is a 3-term chain complex (cf. \cite{Bai_2013}), where  $D_{t^T} = 1\otimes t^T\pm t^T\otimes 1 $ is the cochain differential with the sign $\pm$ depends on the cohomology degree, and $\odot$ denotes the \textit{symmetric} tensor product.

Indeed, it is crucial to note that the exterior product of components of odd cohomologial degree is symmetric, hence we can package $\mathcal{O}_{B\g}$ in the following compact way,
\begin{eqnarray}
    \mathcal{O}_{B\g}^\bullet \cong \operatorname{Sym}^\bullet(\g_{0}^*) \otimes \bigwedge^\bullet \g_{-1}^*\,.\label{paritydecomp}
\end{eqnarray}
By extending the cohomological differential $t: \g_{-1}\to \g_0$ to act trivially on $\g_0$, we equip $\mathcal{O}_{B\g}$ with differential ${\bf t} = Q - t: \mathcal{O}_{B\g}^\bullet\to \mathcal{O}_{B\g}^{\bullet+1}$ given by $t$ as well as the graded Chevalley-Eilenberg differential $Q$ given by the graded cobracket $\delta$.

\medskip

Recall the following notions from \S 2.1 of \cite{pridham2018outline}.
\begin{definition}
    Let $(M,\mathcal{O}_M)$ denote a dg manifold. The \textbf{$n$-shifted polyvector fields} on $M$ is the cochain complex\footnote{Recall that the notation "$[n]$" means to shift \textit{down} the degree by $n$, and "$[-n]$" shifts the degree up.}
    \begin{equation*}
        \mathfrak{X}_M(n) = \prod_j \operatorname{Sym}^j_{C(M)}(T_M[-n-1])
    \end{equation*}
    of $C(M)$-modules obtained from graded symmetric products of $T_M$. In particular, the \textbf{$n$-shifted $k$-vector fields} on $M$ are defined as the associated graded  $\mathfrak{X}_M^{\geq k}(n)/\mathfrak{X}_M^{\geq k+1}(n)$ of the filtration
    \begin{equation*}
        \mathfrak{X}_M^{\geq k}(n) = \prod_{j\geq k}\operatorname{Sym}^j_{C(M)}(T_M[-n-1])\,.
    \end{equation*}
\end{definition}
\noindent The Lie bracket on $T_M$ extends to a differential-graded bracket (the Schouten-Nijenhuis bracket)  $$\mathfrak{X}_M(n)\times \mathfrak{X}_M(n)\to \mathfrak{X}_M(n)[-n-1]\,.$$ Thus, shifting the degrees back up, $\mathcal{X}_M(n)[n+1]$ is a (super) differential-graded Lie algebra (dgla) together with a compatible $C(M)$-module structure.

For any dgla $L$, we denote by $\operatorname{MC}(L)\subset L_1$ the space of \textit{Maurer-Cartan elements} satisfying
\begin{equation*}
    d\omega+[\omega,\omega] = 0,
\end{equation*}
where $[-,-]$ is the graded Lie bracket.
\begin{definition}
    A \textbf{$n$-shifted Poisson bivector field} $\Pi\in \mathfrak{X}^2_M(n)$ on $M$ is a Maurer-Cartan element in $\mathfrak{X}^2_M(n)[-n-1]$.
\end{definition}

\medskip

Let us unpack these definitions in the example of interest, following \textit{Examples 2.9, 2.10} in \cite{pridham2018outline} and \S 2 of \cite{SAFRONOV2021107633}. First for an ordinary Lie algebra $\g$, taking $(M,\mathcal{O}_M) = (B\g,\operatorname{CE}(\g))$ and $n=1$, we have that a 1-shifted Poisson bivector on $B\g$ is defined by the following ingredients,
\begin{equation*}
    \delta\in\g^*\otimes \wedge^2\g,\qquad \phi\in \wedge^3\g
\end{equation*}
which satisfies 
\begin{equation*}
    Q\delta=0,\qquad \frac{1}{2}[\delta,\delta]= Q\phi
\end{equation*}
under the Chevalley-Eilenberg differential $Q$. By viewing $\delta: \g\to \wedge^2\g$ as a cochain, these are nothing but very compact notations for the $\operatorname{ad}$-invariance of $\delta$, as well as the failure of the cobracket/coJacobi identities of $\delta$ measured by the trivector $\phi$. In other words, the 1-shifted Poisson dg manifold $(B\g,\Pi)$ is nothing but a \textit{quasi}-Lie bialgebra.

Similarly, for a Lie 2-algebra $\g$, a 1-shifted Poisson bivector $\Pi\in\mathfrak{X}^2_{B\g}(n)[-n-1]$ is characterized by
\begin{enumerate}
    \item an element $\delta\in \g^*[1]\otimes \g$ which can be viewed as a linear map $\delta: \g[-1]\to \g\wedge\g$,
    \begin{equation*}
        \delta_{-1}: \g_{-1}\to \g_{-1}\otimes \g_{-1},\qquad  \delta_0: \g_0\to (\g_{-1}\otimes\g_0)\oplus(\g_0\otimes\g_{-1})\,,
    \end{equation*}
    \item a trivector $\phi\in \wedge^3\g$ which we for simplicity restrict to $\phi\in\operatorname{Sym}^3(\g_{-1})$,
\end{enumerate}
for which the condition $Q\delta=0$ is nothing but the equations \eqref{2cocycoh}, and the condition $\frac{1}{2}[\delta,\delta]=Q\phi$ is nothing but the \textit{weak} version of the 2-cobracket condition \eqref{2cobracket} --- that is, we have
\begin{eqnarray}
    D_t\phi &=& \sum_{\text{cycl.}}((\delta_{-1}+\delta_0)\otimes 1)\circ \delta_0 = (\delta_{-1}\otimes1)\circ\delta_0 - (1\otimes\delta_0)\circ\delta_0 \nonumber\\
    &\qquad& - ~ (\tau\otimes 1)\circ(1\otimes \delta_0)\circ\delta_0 \nonumber\\
    \phi &=&\sum_{\text{cycl.}}(\delta_{-1}\otimes 1)\circ \delta_{-1} = (\delta_{-1}\otimes 1)\circ\delta_{-1} - (1\otimes\delta_{-1})\circ\delta_{-1} \nonumber\\
    &\qquad& -~ (\tau\otimes 1)\circ(1\otimes\delta_{-1})\circ\delta_{-1}\,.\label{weakcobracket}
\end{eqnarray}
In other words, $(\g,\delta,\phi)$ is a \textit{quasi}-Lie 2-bialgebra \cite{Chen:2012gz,Chen:2013}.

By the same token, a {Poisson-Lie 2-group} $\mathbb{G}$ is precisely a 1-shifted Poisson structure on the Lie 2-group $\mathbb{G}$, considered as a dg manifold equipped with the sheaf $\mathcal{O}_\mathbb{G}=\overleftarrow{\mathfrak{X}}_\mathbb{G}$ of left-invariant vector fields \cite{Chen:2012gz} (see also \S 3 of \cite{SAFRONOV2021107633}). 

\medskip

Now recall that, given a (strict, ie. $\phi=0$) Lie 2-bialgebra $(\g,\delta)$, its shifted dual complex $\g^*[1]$ itself is a Lie 2-(bi)algebra. We can therefore consider the 1-shifted Poisson structure on $B(\g^*[1])$ induced from a so-called \textit{classical 2-$r$-matrix} \cite{Bai_2013}.

\medskip

We make a brief comment here that the above theory of dg manifolds extends naturally to the broader context of \textit{derived} manifolds and derived Poisson stacks. However, the results of the following sections rely heavily on a systematic theory of 2-graded classical Yang-Baxter equations, which is not yet available in the weak case (for some recent developments in this direction, see \cite{Cirio:2012be,Kemp_2025}).

\subsection{The 2-graded classical \texorpdfstring{$r$}{Lg}-matrix}\label{2grclrmat}
The theory of 2-bialgebras and the associated 2-graded classical $r$-matrix $R\in \mathfrak{g}_0\otimes\mathfrak{g}_{-1}$ have been studied previously in \cite{Bai_2013,Chen:2012gz,chen:2022}. 
\begin{definition}\label{2rmatrixdef}
  Denote by
 $$D_t^\pm= t\otimes 1\pm1\otimes t$$ 
 the differential in the tensor product complex $\g^{2\otimes}$.  A \textbf{2-graded classical $r$-matrix} $R\in \g_{-1}\otimes\g_0\oplus\g_0\otimes\g_{-1}$ on a finite-dimensional Lie 2-algebra $\g$ is an element of degree -1 in $\g^{\otimes 2}$ which satisfies  the {\bf 2-graded classical Yang-Baxter equations} (2-CYBE)
\begin{gather}
    \llbracket R,R\rrbracket = [R_{12},R_{13}] +[R_{12},R_{23}]  + [R_{13},R_{23}] = 0,\nonumber\\
    D_t^-R= 0,\label{2cybe}
\end{gather}
where $\llbracket-,-\rrbracket$ is the graded Schouten bracket (by treating $\g$ as left-invariant vector fields on its Lie 2-group $G$ as a groupoid) \cite{Bai_2013,Chen:2012gz}. 
\end{definition}
\noindent Note the second condition $D_t^-R=0$ has no 1-graded analogue, and is necessary for our theory (see for instance \textbf{Proposition \ref{dgmap}}).

Given a 2-graded $r$-matrix $R$, write $R = R_1 \oplus R_2$, where $R_1\in \g_{-1}\otimes\g_0$ and $R_2\in\g_0\otimes \g_{-1}$, then we form the skew-symmetric and symmetric pieces
\begin{equation*}
    R^\wedge = (R_1 - R_2') \oplus (R_2 - R_1'),\qquad R^\odot = (R_1+R_2')\oplus (R_2+R_1') \label{skew/sym}
\end{equation*}
of $R$, where $R'$ denotes the tensor with swapped tensor factors.\footnote{That is, we have $R_1'\in\g_0\otimes\g_{-1}$ and $R_0'\in\g_{-1}\otimes\g_0$.} It then follows clearly that we have $R = R^\wedge + R^\oplus$, and \eqref{2cybe} becomes $$\llbracket R^\wedge ,R^\wedge\rrbracket=-\llbracket R^\odot,R^\odot\rrbracket\,.$$

Now It was shown in \cite{Bai_2013} that, under certain mild technical conditions, there is a decomposition $R = r-D_t^+ \r$ where
\begin{equation}
    r\in (\mathfrak{g}_0\otimes\mathfrak{g}_{-1})\oplus(\mathfrak{g}_{-1}\otimes\mathfrak{g}_0),\qquad \r\in\mathfrak{g}_{-1}^{\otimes 2}.\nonumber
\end{equation}
For later applications, it would be convenient to suppose that the skew-symmetric and symmetric components of $R$ also decompose accordingly \cite{chen:2022},
\begin{equation}
    R^\wedge = r^\wedge - D_t^+\r^\wedge ,\qquad R^\odot = r^\odot - D_t^+\r^\odot,\label{2Rmatdecomp} 
\end{equation}
where $r^\wedge,r^\odot$ are defined as in \eqref{skew/sym}, and similarly for $\r^\wedge,\r^\odot$. Notice under  \eqref{2Rmatdecomp}, the condition $D_t^-R=0$ only concerns the factor $r$, as $D_t^-D_t^+=0$ \cite{Bai_2013}.


\medskip

These elements $R^\odot,R^\wedge$ shall play a central role in the following. 

\paragraph{Skew-symmetric component $R^\wedge$.} 
It can be shown that the skew-symmetric piece $R^\wedge$ yields a 2-coboundary $\delta=(\delta_{-1},\delta_0)=dR^\wedge$ on $\mathfrak{g}$ satisfying \eqref{2cocycoh}, \eqref{2cobracket}, given explicitly by\footnote{Note here that $[-,-]$ is the graded Lie bracket on $\mathfrak{g}$, which includes $[-,-]|_{\mathfrak{g}_0}$ as well as the crossed-module action $\rhd$.} \cite{Bai_2013} 
\begin{equation}
\delta_0(X) = [X\otimes 1+1\otimes X,R^\wedge],\qquad \delta_{-1}(Y) = [Y\otimes 1+1\otimes Y,R^\wedge].\label{2cocy}
\end{equation}
Hence, the data $(\g,\delta=dR^\wedge)$ defines a Lie 2-bialgebra. The corresponding Lie 2-group $G$ would then also have a Poisson-Lie 2-group structure.

\medskip

What is important for this paper is the fact that the 2-$R$-matrix $R$ determines an \textit{alternative} dg $R$-bracket $[-,-]_R: \g\otimes\g\to\g$ on $\g$, through which the Kostant-Kirilov symplectic structure can be generalized. We shall do this in \S \ref{2kkbracket}.

\paragraph{Symmetric component $R^\odot$.} Recall from the Lie 1-algebra case that the classical Yang-Baxter equation constrains the symmetric piece of the classical $r$-matrix to be $\operatorname{ad}$-invariant. Similarly, \eqref{2cybe} constrains $R^\odot$ to be invariant under the 2-adjoint representation $_2\operatorname{ad}$ \eqref{adrep}. Moreover, $R^\odot$ must also satisfy the equation $D_t^-R^\odot=0$.

Now provided $R^\odot\in\mathfrak{g}_0\odot\mathfrak{g}_{-1}$ is non-degenerate, it determines a non-degenerate bilinear form $\langle-,-\rangle=\langle-,-\rangle_{\odot}$ on $\g$, up to a scalar factor. Similar to the Lie 1-algebra case, this scalar factor can be fixed by choosing $R^\odot$ to be a {quadratic 2-Casimir} of $\mathfrak{g}$  as in \textbf{Definition \ref{2casimir}}. In other words,  a solution $R$ of the 2-graded classical Yang-Baxter equations in fact makes $\g$ into a \textit{balanced} Lie 2-bialgebra.

\begin{remark}\label{zamTE}
    It is important to note that the 2-graded $r$-matrix does \textit{not} give rise to solutions of the {Zamolodchikov tetrahedron equations} \cite{Zamolodchikov:1980,Kuniba_2023}, but it is known that they both are crucial data involved in the structure of braided monoidal 2-categories (cf. \cite{Kapranov:1994,neuchl1997representation,Chen1:2025?}) --- the 2-$r$-matrix is an "infinitesimal" incarnation of the braiding \cite{Kemp_2025}. The relation between the 2-graded Yang-Baxter equations and the tetrahedron equations is not yet clear at the moment.
\end{remark}

\medskip

The interplay of the quadratic 2-Casimir $R^\odot$ and the 2-coboundary $R^\wedge$, as dictated by the 2-graded classical Yang-Baxter equations \eqref{2cybe}, is the crucial datum in the following that allows us to develop our theory of higher integrability.

\section{2-graded integrability}\label{2integra}
The goal in this section is to lay out the general theory of 2-graded integrable systems, and derive an appropriate notion of a "2-graded Lax equation" as a categorification of the usual Lax equation. Then, once this is achieved, we specialize to the dual crossed-module $\mathfrak{g}^*[1]$ and construct a 2-graded Lax pair on it, in analogy with the 1-algebra case introduced in \S \ref{poisbialg}. We then work to prove that it does in fact satisfy the 2-graded Lax equations.

Consider a dg manifold $M$ equipped with a 1-shifted Poisson bivector $\Pi\in\mathfrak{X}_M^2(1)[-2]$ as in \S \ref{2grpois2alg}. We define a curve $\gamma$ on the dg manifold as a curve $\gamma:[0,1]\to M$ on the underlying manifold $M$, together with an induced pullback map $\gamma^*:C(M)\to\Omega^\bullet([0,1])$ on sheaves of dgca's, where $\Omega^\bullet([0,1])$ is the de Rham differential forms on $[0,1]$. In other words, we treat  $[0,1]$ as a dg manifold with its shifted tangent complex $T_{[0,1]}[-1]$.

\begin{definition}\label{dgcurve}
    Let $(M,\mathcal{O}_M)$ denote a dg Poisson manifold. For a Hamiltonian $H\in C(M)$, we say the integral curve $\gamma:[0,1]\to M$ is \textbf{generated by $H$} iff it is generated by the dg Hamiltonian vector field $X_H=\{H,-\}=\Pi(H\otimes-)$. 
\end{definition}
\noindent We keep in mind that the vector field $X_H$ has the opposite grading parity compared to $H$; see \S 4 of \cite{Voronov:2008}.

\subsection{2-Lax pair}\label{2poiss}
Consider a dg space $M$ and its smooth global sections $C(M) = \Gamma(M,\mathcal{O}_M)$. For $M$ equipped with a dg Poisson structure, we let $(C(M),\{-,-\})$ denote the dg Poisson algebra arising from it.

Consider Lie 2-algebra $\mathfrak{g}$-valued functions on $M$. In the dg context, they are elements in the tensor product $C(M)\otimes\g$, which upon totalization is the cochain complex
\begin{gather}
    \cdots\to\LaTeXunderbrace{(C^{-1}(M)\otimes \mathfrak{g}_{-1})\oplus (C^{-2}(M)\otimes\g_0)}_{\text{deg-}(-2)} \xrightarrow{D^+}  \LaTeXunderbrace{(C^0(M)\otimes\mathfrak{g}_{-1})\oplus(C^{-1}(M)\otimes\mathfrak{g}_0)}_{\text{deg-}(-1)} \nonumber \\
    \xrightarrow{D^-} \LaTeXunderbrace{C^0(M) \otimes\mathfrak{g}_0}_{\text{deg-}0}\to 0
    \label{tensorxmod}
\end{gather}
with the chain maps $D^\pm = 1\otimes t\pm  {\bf t}\otimes 1$ (with sign depending on the degree), where $\textbf{t}$ is the differential on $C(M)$. The graded Lie bracket $[-,-]$ on $\g$, together with the graded Poisson bracket $\{-,-\}$ on $C(M)$, endow this complex with two individual Lie 2-algebra structures.

Let $H\in C(M)$ be a smooth function on $M$, which we call the \textit{Hamiltonian}.
\begin{definition}
We say a tuple of elements $L,P\in C(M)\otimes\g$ is a \textbf{2-Lax pair} of the Hamiltonian system $(M,\{-,-\},H)$ iff, on an integral curve $\gamma:[0,1]\to M$ generated by $H$ (as in \textbf{Definition \ref{dgcurve}}), they satisfy the {\bf dg-Lax equation}
\begin{equation}
    \partial_t (\gamma^*L) \equiv \gamma^*(\{H,L\}) = [\gamma^*P,\gamma^*L],\label{2grlax}
\end{equation}
where $\{-,-\},[-,-]$ are the graded Poisson/Lie brackets on the complex \eqref{tensorxmod}.
\end{definition}
\noindent Due to the graded nature of the Hamiltonian $H$, the dynamics along the integral curve $\gamma$ it generates is also graded, in the sense that there are essentially a chain complex worth of "energies" under a single time parameter.

\medskip

We shall see that the canonical 2-Lax pair $L,P$ constructed in \S \ref{2laxconstruction} makes crucial use of the fact that they form homogeneous elements of the cochain complex $C(M)\otimes\g$.

\subsection{Conserved quantities}
Recall that in the 1-algebra case, the trace polynomials $f_k$ of the Lax function $L$ are constants of motion. We wish now to investigate the analogous notion of "2-graded integrability" afforded by the 2-Lax equations  \eqref{2grlax}. Toward this, we must first explain how to construct trace polynomials in the 2-graded context and hence the relevant concept of 2-representation in our context.

\paragraph{2-representations.} 
Let $V=V_{-1}\xrightarrow{\partial} V_0$ denote a 2-term complex of vector spaces. The following is from \cite{Angulo:2018} (see also \cite{Baez:2003fs,heredia2016representations2groupsbaezcrans2vector}).
\begin{definition}
The space of endomorphisms $\mathfrak{gl}(V):\operatorname{End}^{-1}(V)\xrightarrow{\delta}\operatorname{End}^0(V)$ of $V$ is a 2-graded space
\begin{equation}\label{eq:MN}
    \operatorname{End}^{-1}(V) = \operatorname{Hom}(V_0,V_{-1}),\qquad \operatorname{End}^0(V) =\{M+N\in \operatorname{End}(V_{-1})\oplus \operatorname{End}(V_0)\mid N\partial = \partial M\}, 
\end{equation}
equipped with the following (strict) Lie 2-algebra structure \cite{Bai_2013,Angulo:2018}
\begin{eqnarray}
\delta:\operatorname{End}^{-1}(V)\rightarrow \operatorname{End}^0(V),&\quad& \delta(A) = A\partial + \partial A,\nonumber\\
{[M+N,M'+N']}_C = [M,M']+[N,N'],&\quad& (M+N)\rhd_C A = MA - AN,\nonumber\\
{[A,A']}_C = A\partial A' - A'\partial A,\nonumber
\end{eqnarray}
for each $M+N\in\operatorname{End}^0(V),~A\in\operatorname{End}^{-1}(V)$.
\end{definition}

Generally, $\operatorname{End}(V)$ of a dg vector space $V$ will have the structure of a dgla which we could use. We truncate $V$ to only have two terms for simplicity.

\begin{definition}\label{2repdef}
A (strict) {\bf 2-representation} $\rho:\g\rightarrow\mathfrak{gl}(V)$ is a Lie 2-algebra homomorphism such that the following square
\begin{equation}
\begin{tikzcd}
    \g_{-1}\arrow[r,"t"] \arrow[d,"\rho_1"] & \g_0 \arrow[d,"\rho_0"] \\ 
    \operatorname{End}^{-1}(V) \arrow[r,"\delta"] & \operatorname{End}^0(V)  
\end{tikzcd}    \label{commsq}
\end{equation}
commutes. More explicitly, we have $\rho=(\rho_0, \rho_1)$ with $\rho_0(X)=(\rho_0^0(X), \rho_0^1(X))\in \operatorname{End}^0(V)$ and $\rho_1(Y)\in \operatorname{End}^{-1}(V)$ for each $X\in\g_0,Y\in\g_{-1}$, such that the following conditions
\begin{align}
&    \rho_0^0(tY) = \partial\rho_{1}(Y),\qquad \rho_0^{1}(tY) = \rho_{1}(Y)\partial,\nonumber\\
 &   \rho_{1}(X\rhd Y) = (\rho_0X)\rhd_C\rho_{1}Y= \rho_0^1(X)\rho_{1}(Y) - \rho_{1}(Y)\rho_0^0(X)\label{axioms2rep}
\end{align}
are satisfied.
\end{definition}
\noindent Furthermore, $\rho_0=\rho_0^{1}+\rho_0^0$ represents $\g_0$ on respectively $V_{-1}$ and $V_0$, with $\partial$ as the intertwiner. Elementary examples of 2-representations include the adjoint representation of $\g$ on itself, or the coadjoint representation of $\g$ on the dual Lie 2-algebra $\g^*[1]$; see \cite{Bai_2013,chen:2022} for details of these examples.

\smallskip

Any 2-representation $\rho$ as defined above gives rise to a "genuine" matrix representation $\rho^\text{gen}$ (terminology from \cite{Angulo:2018}) on the direct sum $V_{-1}\oplus V_0$, which takes the form of a block matrix
\begin{equation}
    \rho^\text{gen}(L)= \begin{pmatrix}\rho_0^{1}(L_0+tL_{-1}) & \rho_{1}(L_{-1}) \\ 0 & \rho_0^0(L_0)\end{pmatrix}\in C(M)\otimes\mathfrak{gl}(V_{-1}\oplus V_0), \quad L_0\in\g_0, \quad L_{-1}\in\g_{-1}\label{matrep},
\end{equation}
where $L_{-1},L_0$ denote the graded components of $L$ that take values in $\g_{-1},\g_0\subset \g$, respectively. It can be shown that we have $\rho^\text{gen}([L,P]) = [\rho^\text{gen}(L),\rho^\text{gen}(P)]_C$, where $[-,-]_C$ is the matrix commutator bracket on $\mathfrak{gl}(V_{-1}\oplus V_0)$.

\paragraph{Invariance under 2-adjoint action.}
Clearly, the 2-adjoint action $_2\operatorname{ad}$ \eqref{adrep} is a Lie 2-algebra 2-representation $\g\to \mathfrak{gl}(\g)$. By the Lie theorem for Lie 2-groups (assuming the complex underlying $\g$ is finite-dimensional), it integrates to the adjoint Lie 2-group 2-representation $_2\operatorname{Ad}: G\to \operatorname{GL}(\g)$, which can be defined strictly through the perspective of crossed-modules (see Example 3.3.9 in \cite{Angulo:2018} and \S 1.7 of \cite{Wagemann+2021}). 

It can be extended to a graded automorphism on the Chevalley-Eilenberg complex $B(\g^*[1])=\operatorname{CE}(\g^*[1])$ through the diagonal.
\begin{definition}\label{def:invariant}
 A function $H\in C(B(\g^*[1]))=\operatorname{CE}(\g^*[1])$ is invariant iff 
\begin{equation}
    H\in \operatorname{CE}(\g^*[1])^{G}\label{adinv}
\end{equation}
lies in the strict fixed-points under the 2-adjoint action $_2\operatorname{Ad}: G\to \operatorname{GL}(\g)$. 
\end{definition}
Note that, in the derived context, the adjoint action by weak Lie 2-algebras (or more generally $L_\infty$-algebras)/weak Lie 2-groups is defined only up to homotopy; the case incorporating the Jacobiator/homotopy map was studied in \S 4 of \cite{Chen:2023tjf}. See also \textit{Remark \ref{homotopyinvariants}}.


\paragraph{Constants of (graded) motion.} We are now ready to characterize the notion of conserved quantities inherited from the construction of 2-representation built out on  2-vector spaces of the Baez-Crans type. 

For a strict Lie 2-algebra $\g$, there is a notion of a function $f:\g \to k$ being \textit{strict} invariant,
$$f([X,Y]_C) = 0,\qquad \forall~ X,Y\in\g.$$
We call such invariant functions the\textbf{class functions} on $\g$.\footnote{For $\g=\mathfrak{gl}(V)$ of a 2-term cochain complex $V$ and $f$ linear, in particular, this strict invariance is the same as cyclicity
\begin{equation*}
    f([M,N]_C) = f(MN-NM)= f(MN)-f(NM)=0,
\end{equation*}
where $M,N\in\mathfrak{gl}(V)$ and $[-,-]_C$ is the commutator.} 
\begin{theorem}
Let $V$ be a 2-representation of a Lie 2-algebra $\g$ and $\chi_V:\mathfrak{gl}(V)\rightarrow k$ denote a class function on $\mathfrak{gl}(V)$. Given a 2-Lax pair $L,P\in C(M)\otimes \g$, the 2-Lax equation  \eqref{2grlax} implies that the polynomials
\begin{equation}
    \mathcal{F}_m = \chi_V(\rho(L)^m)\label{tracepoly}
\end{equation}
are constant along any integral curve $\gamma$ generated by the Hamiltonian $H$ (as in \textbf{Definition \ref{dgcurve}}) on $M$.
\end{theorem}
\begin{proof}
The proof runs in exact analogy with the 1-algebra case \cite{Meusburger:2021cxe}. Take an integral curve $\gamma:[0,1]\to M$ generated by $H$ and its induced dgca map $\gamma^*: C(M)\to \Omega^\bullet([0,1])$. It commutes with $\rho$, since they act on different tensor factors in $C(M)\otimes \g $. Then, from \eqref{2lax} and the invariance of $\chi_V$, we have
\begin{align*}
    \partial_t{(\gamma^*\mathcal{F}}_m) &= \partial_t \chi_V (\rho[\gamma^*L,\gamma^*P]^m) =  \chi_V (\partial_t[\gamma^*\rho(L),\gamma^*\rho(P)]_C^m)\\
    &=\sum_{i=0}^{m-1}\chi_V\big(\rho(\gamma^*L)^i\rho(\partial_t(\gamma^* L))\rho(\gamma^*L)^{m-i-1}\big) = m\, \chi_V\big(\rho(\gamma^*L)^{m-1}\rho([\gamma^*L,\gamma^*P])\big).
\end{align*}
By the fact that $\rho$ is a homomorphism of Lie 2-algebras, we have $\rho([L,P]) = [\rho(L),\rho(P)]_C$ and hence \begin{eqnarray*}
    \chi_V\big(\rho(\gamma^*L)^{m-1}\rho([\gamma^*L,\gamma^*P])\big) &=& \chi_V(\rho(\gamma^*L)^{m-1}[\rho(\gamma^*L),\rho(\gamma^*P)]_C) \nonumber\\
    &=&\chi_V([\rho(\gamma^*L)^m,\rho(\gamma^*P)]_C) = 0,
\end{eqnarray*}
again from the invariance of $\chi_V$.
\end{proof}

Note that the conservation of these trace polynomials is independent of the choice of the 2-representation $\rho$. However, what exactly is being conserved does depend on $\rho$. The conservation of these eigenvalues can be understood as the notion of "2-graded integrability" that the 2-Lax pair in {\bf Definition \ref{2grlax}} affords. 

\medskip

To see the conserved quantities explicitly for a given 2-representation $\rho$, we make use of the genuine matrix representation $\rho^\text{gen}$ \eqref{matrep} corresponding to $\rho$. It is obvious that the usual matrix trace form $\chi_V=\operatorname{tr}_{V_{-1}\oplus V_0}$ on $\mathfrak{gl}(V_{-1}\oplus V_0)$ is a class function \cite{Angulo:2018}. As such, the 2-graded trace polynomials 
\begin{eqnarray*}
    \mathcal{F}_m = \operatorname{tr}_{V_{-1}\oplus V_0}(\rho^\text{gen}(L)^m)
\end{eqnarray*}
of a Lax operator $L\in C(M)\otimes\g$ are conserved for any $m\in\mathbb{Z}_{\geq 0}$. Since $\rho_\text{gen}(\gamma^*L)$ is block triangular, so are its  powers. A fundamental result in linear algebra then tells us that the conserved quantities $\mathcal{F}_m$ can be organized in terms of the combined eigenvalues of the diagonal blocks of $\rho^\text{gen}(L)$:
\begin{equation}
    \operatorname{Eigen}\rho^\text{gen}(L)^m = \operatorname{Eigen}\rho_0^{1}(L_0+tL_{-1}) \coprod \operatorname{Eigen}\rho_0^0(L_0).\nonumber
\end{equation}
These are example of the conserved quantities associated to the 2-Lax equation \eqref{2grlax} that one can always compute, using the genuine representation \eqref{matrep}.

\medskip

Let us consider some special cases. When $t=0$, $\rho_0$ determines two distinct representations on $V_{-1},V_0$. Thus in general we achieve $n+m$ number of distinct eigenvalues of $\rho^\text{gen}(L) = \rho_0L_0$, where $n+m = \operatorname{dim}(V_{-1}\oplus V_0) = \operatorname{dim}V_{-1} + \operatorname{dim}V_0$. On the other hand, when $t=1$ and $\partial =1$, the components $\rho_0^{1}=\rho_0^0$ of $\rho_0$ must coincide. If $L_0=L_{-1}$ (see {\bf Corollary \ref{ident-t}}), we have two copies of the $n$ eigenvalues of $\rho(L)$. In this $t=1$ case, we can say that our notion of 2-graded integrability consist of "two copies" of the usual integrability.

\medskip

\begin{remark}\label{homotopyinvariants}
    Here we make the crucial comment that the genuine matrix representation $\rho^\text{gen}$ is only available for \textit{strict} Lie 2-algebra 2-representations. The above notion of "eigenvalues" breaks down in the general $L_\infty$ case, as invariance of class functions is defined only up to homotopy --- ie. they are equipped with a chain homotopy $_2\operatorname{ad}^* \Rightarrow \operatorname{id}$ trivializing the 2-coadjoint action. Here, the data of this homotopy is in fact a crucial part of the very notion of "conserved quantity", and the "eigenvalues" should be replaced by some homotopy categorical construction.
\end{remark}

\subsection{2-Kirillov-Kostant $R$-bracket on $C^\infty(\g^*[1])$}\label{2kkbracket}
We first generalize the standard Kirillov-Kostant Poisson structure to the Lie 2-algebra context. This shall serve as the appropriate setting for constructing a canonical 2-Lax pair on the delooping space $B(\g^*[1])$ of the dual complex  of a given Lie 2-bialgebra $(\g;R^\wedge)$.

Recall from \S \ref{2grpois2alg} that the 1-forms $\Omega_{B(\g^*[1])}^1$ are generated by homogeneous elements of the form $d\phi$ for dg smooth functions $\phi\in C(B(\g^*[1]))=\operatorname{CE}(\g^*[1])$.  By extending the canonical evaluation pairing,  $$\langle-,-\rangle: \g^{\otimes n} \odot (\g^*[1])^{\otimes n}\to \mathbb{C}[1]\,,$$ there is a non-degenerate invariant bilinear form which identifies the 1-vectors, defined as the dual $T_{B(\g^*[1])}=\operatorname{Hom}_{dg}(\Omega^1_{B(\g^*[1])},\bbC)$ to the 1-forms, with the dg smooth functions.

Given $\g$ is a Lie 2-bialgebra with a compatible graded cobracket $\delta$, for $\phi,\phi'\in  C(B(\g^*[1]))$ we can define $\{\phi,\phi'\}^*\in \operatorname{CE}(\g^*[1])$ to be the element corresponding to $[d\phi,d\phi']\in \Omega^\bullet_{B(\g^*[1])}$ under the evaluation pairing $\langle-,-\rangle$. We write this with the notation
\begin{equation}
     \{\phi,\phi'\}^*=\langle-,[d\phi,d\phi']\rangle  \label{2poisdual}.
\end{equation}
\begin{proposition}\label{prop:2poismain}
For a Lie 2-bialgebra $(\g;\delta)$, the Chevalley-Eilenberg complex $\operatorname{CE}(\g^*[1])$, equipped with the Poisson bracket $\{-,-\}^*$ given in \eqref{2poisdual}, is a dg Poisson algebra. In other words, $B(\g^*[1])$ is a 1-shifted Poisson dg manifold.
\end{proposition}
\begin{proof}
The proof runs analogously to the ordinary Lie algebra case. The graded Jacobi identity clearly follows from that of $[-,-]$ on $\g$. Thus let us verify the graded Leibniz rule 
\begin{equation*}
    Q\{\phi,\phi'\}^* =\{Q\phi,\phi'\}^*+(-1)^{|\phi|} \{\phi,Q\phi'\}^*
\end{equation*}
with respect to the Chevalley-Eilenberg differential $Q = \delta-D_t$ given by the Lie 2-algebra cobracket $\delta$ on $\g$. 

By direct computation from \eqref{2poisdual}, we have 
\begin{equation*}
    Q\{\phi,\phi'\}^*  =\langle -,\delta [d\phi,d\phi']\rangle  -\langle -,D_t[d\phi,d\phi']\rangle 
\end{equation*}
and since $Q$ commutes with $d$,\footnote{Here we have abused notation to write $d\phi\otimes 1+1\otimes d\phi$ still as $d\phi$.} 
\begin{align*}
    \{Q\phi,\phi'\}^*+(-1)^{|\phi|} \{\phi,Q\phi'\}^* &= \langle -,[d(Q\phi),d\phi']\rangle +(-1)^{|\phi|}\langle -,[d\phi,d(Q\phi')]\rangle \\ 
    &= \langle -,[Q(d\phi),d\phi']\rangle +(-1)^{|\phi|}\langle -,[d\phi,Q(d\phi')]\rangle \\ 
    &= \langle -, [\delta(d\phi),d\phi']\rangle +(-1)^{|\phi|}\langle -,[d\phi,\delta(d\phi')]\rangle\\
    &\qquad - \langle -,[D_{t}(d\phi),d\phi']\rangle -(-1)^{\tilde \phi}\langle -,[d\phi,D_{t}(d\phi')]\rangle\, \\
\end{align*}
where $\tilde\phi$ denotes the cohomological parity of $\phi$ in terms of the odd-even decomposition \eqref{paritydecomp}. We therefore see that the compatibility between $\{-,-\}^*$ and $Q$ are precisely the conditions \eqref{pfeif} and the cocycle conditions \eqref{2cocycoh}.

\end{proof}

We now construct an alternative Poisson structure on $B(\g^*[1])$ by explicitly making use of the classical 2-$r$-matrix.  In analogy with  \eqref{algmap}, we first define a map $\varphi = ({\varphi}_{-1},{\varphi}_0):\mathfrak{g}\rightarrow\mathfrak{g}$ of 2-graded vector spaces, then use it to define an alternative $L_2$-bracket $[-,-]_R$ on $\g$. Let us fix the bases $\{T_i\}_i,\{S_a\}_a$ of $\mathfrak{g}_0,\mathfrak{g}_{-1}$ respectively.

\begin{proposition}\label{dgmap}
Given an element $R\in\g_0\otimes\g_{-1}\oplus\g_{-1}\otimes\g_0$ of degree -1 in $\g^{\otimes 2}$, the map $\varphi = ({\varphi}_{-1},{\varphi}_0):\mathfrak{g}\rightarrow\mathfrak{g}$ defined by
\begin{eqnarray}
     \varphi_{-1}:\mathfrak{g}_{-1} \rightarrow\mathfrak{g}_{-1},&\qquad& Y\mapsto 
      (R^\wedge)^{ia}\langle Y,T_i\rangle S_a,\nonumber\\
      \varphi_0:\mathfrak{g}_0\rightarrow\mathfrak{g}_0,&\qquad& X\mapsto (R^\wedge)^{ai}\langle X,S_a\rangle T_i,\nonumber 
\end{eqnarray}
is a 2-vector space homomorphism (ie. a dg linear map) if and only if  $D_t^-R^\wedge=0$.
\end{proposition}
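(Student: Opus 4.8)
The defining condition for $\varphi=(\varphi_{-1},\varphi_0)$ to be a homomorphism of 2-vector spaces is that the square with vertical maps $\varphi_{-1},\varphi_0$ and horizontal maps $t$ commutes, i.e.
\[
    t\circ\varphi_{-1} \;=\; \varphi_0\circ t \colon\ \g_{-1}\longrightarrow\g_0 ;
\]
there is no further condition, since $\varphi$ has only these two components. The plan is to re-express this single equation, via the non-degenerate 2-Casimir pairing $\langle\cdot,\cdot\rangle=\langle\cdot,\cdot\rangle_\odot$ (which pairs $\g_0$ with $\g_{-1}$), as the vanishing of $D_t^-R^\wedge$. The first step is to notice that, under this pairing, $\varphi_0$ and $\varphi_{-1}$ are just the contractions of the two graded components of $R^\wedge=R^\wedge_{(-1,0)}+R^\wedge_{(0,-1)}$, with $R^\wedge_{(-1,0)}\in\g_{-1}\otimes\g_0$ and $R^\wedge_{(0,-1)}\in\g_0\otimes\g_{-1}$, on their first tensor legs: $\varphi_0(X)=\iota_X R^\wedge_{(-1,0)}$ for $X\in\g_0$ and $\varphi_{-1}(Y)=\iota_Y R^\wedge_{(0,-1)}$ for $Y\in\g_{-1}$, which is what the two displayed formulas say once the index placement in $(R^\wedge)^{ai}$ and $(R^\wedge)^{ia}$ is read off.

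Next I would test the square on an arbitrary $Y\in\g_{-1}$ and pair the resulting elements of $\g_0$ against an arbitrary $Z\in\g_{-1}$; this loses nothing, because $\langle\cdot,\cdot\rangle$ is a non-degenerate pairing of $\g_0$ with $\g_{-1}$. Writing $\langle\cdot,\cdot\rangle$ also for the factorwise extension $\langle a\otimes b,c\otimes d\rangle=\langle a,c\rangle\langle b,d\rangle$ pairing $\g_{-1}^{\otimes2}$ with $\g_0^{\otimes2}$, the symmetry of the form alone gives
\[
    \langle t\varphi_{-1}(Y),\,Z\rangle \;=\; \langle Y\otimes Z,\ (1\otimes t)R^\wedge_{(0,-1)}\rangle ,
\]
while the other side needs in addition the self-adjointness $\langle tY,S\rangle=\langle Y,tS\rangle$ for $Y,S\in\g_{-1}$ --- equivalently, that the identification $\g\cong\g^*[1]$ carries $t^T$ to $t$, which is part of the characterization of the 2-Casimir form in \cite{chen:2022} and is also where the second 2-CYBE condition $D_t^-R^\odot=0$ enters --- yielding
\[
    \langle \varphi_0(tY),\,Z\rangle \;=\; \langle Y\otimes Z,\ (t\otimes 1)R^\wedge_{(-1,0)}\rangle .
\]
Subtracting, and using that $(1\otimes t)R^\wedge_{(-1,0)}$ and $(t\otimes1)R^\wedge_{(0,-1)}$ vanish for degree reasons (the $t$-map has domain $\g_{-1}$), the two surviving terms recombine into $D_t^-=1\otimes t-t\otimes1$ applied to all of $R^\wedge$:
\[
    \langle\, t\varphi_{-1}(Y)-\varphi_0(tY),\ Z\,\rangle \;=\; \langle\, Y\otimes Z,\ D_t^-R^\wedge\,\rangle,\qquad \forall\, Y,Z\in\g_{-1}.
\]

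To conclude: $D_t^-R^\wedge$ lies in $\g_0\otimes\g_0$, and the factorwise-extended pairing of $\g_{-1}^{\otimes2}$ with $\g_0^{\otimes2}$ is non-degenerate, so the right side above vanishes for all $Y,Z$ iff $D_t^-R^\wedge=0$; the left side vanishes for all $Y,Z$ iff $t\circ\varphi_{-1}=\varphi_0\circ t$, i.e.\ iff $\varphi$ is a 2-vector space homomorphism. This gives both implications at once. The main obstacle is bookkeeping rather than conceptual: matching the index conventions of the stated formulas for $\varphi_{-1},\varphi_0$ to the correct graded slots of $R^\wedge$, and carefully invoking the self-adjointness of $t$ with respect to $\langle\cdot,\cdot\rangle$ --- the single place where the symmetric datum $R^\odot$, and hence a $D_t^-$-type constraint, is genuinely needed.
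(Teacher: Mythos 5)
Your proof is correct and follows essentially the same route as the paper's: both reduce the single homomorphism condition $t\circ\varphi_{-1}=\varphi_0\circ t$ to the skew-symmetry statement $D_t^-R^\wedge=0$ by contracting $R^\wedge$ against test elements and stripping them off via the non-degenerate 2-Casimir pairing. The only difference is that you make explicit the two ingredients the paper's index computation absorbs silently --- the self-adjointness of $t$ with respect to $\langle\cdot,\cdot\rangle_\odot$ and the non-degeneracy of the factorwise-extended pairing --- which is a welcome clarification rather than a deviation.
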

\begin{proof}
Clearly, $\varphi$ is linear, hence it remains to show that  $t\varphi_{-1}= \varphi_0t$. By definition, this requires
\begin{equation}
    (R^\wedge)^{ia}T_i\wedge t(S_a) = (R^\wedge)^{ai} t(S_a)\wedge T_i \nonumber
\end{equation}
for each basis elements $T_i\in\mathfrak{g}_0,S_a\in\mathfrak{g}_{-1}$. In other words, the combination $(R^\wedge)^{ia}t_a^j$ is skew-symmetric; this is precisely the condition $D_t^-R^\wedge=0$ in  \eqref{2cybe} \cite{Bai_2013}.
\end{proof}

\begin{proposition}
Suppose $R\in\g_0\otimes\g_{-1}\oplus\g_{-1}\otimes\g_0$ satisfies the 2-CYBE \eqref{2cybe}. The bracket defined by
    \begin{equation*}
    [Y+X,Y’+X’]_R = [\varphi(Y+X),Y’+X’] + [Y+X,\varphi(Y’+X’)],
\end{equation*}
is a Lie 2-algebra bracket which satisfies 
\begin{equation}
    [g_X+f_X,g'_Y+f'_X] = \langle-, [Y+X,Y'+X']_R\rangle \label{Rbracket}
\end{equation}
where $f^{(\prime)}_X=\langle -,X^{(\prime)}\rangle\in \g^*_{-1}$ and $g_Y^{(\prime)} = \langle-, Y^{(\prime)}\rangle\in \g^*_0$.
\end{proposition}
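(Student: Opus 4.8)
The statement asserts two things: first, that $[\cdot,\cdot]_R$ is a genuine Lie 2-algebra bracket on $\g$, and second, that it is the bracket dual (under the pairing $\langle\cdot,\cdot\rangle_\odot$) to the coadjoint-type bracket on $\g^*[1]$ induced by the coboundary $\delta = dR^\wedge$ in \eqref{2cocy}. The natural approach is to reduce both claims to properties of the coboundary $\delta = dR^\wedge$ that were already established earlier, rather than verifying the graded Koszul and 2-Jacobi identities by brute force. I would first unpack the definition: writing $\varphi$ in the chosen bases $\{T_i\},\{S_a\}$, one has $\varphi_0(X) = (R^\wedge)^{ai}\langle X,S_a\rangle T_i$ and $\varphi_{-1}(Y) = (R^\wedge)^{ia}\langle Y,T_i\rangle S_a$, and the previous proposition guarantees $t\varphi_{-1}=\varphi_0 t$ precisely because $D_t^- R^\wedge = 0$ holds by \eqref{2cybe}. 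So $\varphi$ is a chain map, and $[\cdot,\cdot]_R = [\cdot,\cdot]\circ(\varphi\otimes 1 + 1\otimes\varphi)$ is at least a well-defined graded-antisymmetric bilinear operation respecting the grading.

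\textbf{Step 1: identify $[\cdot,\cdot]_R$ with the dual bracket.} I would establish \eqref{Rbracket} directly from \eqref{2cocy} and the definition of the dual bracket on $\g^*[1]$ given at the end of Section \ref{2grclrmat}, namely $[f,f']_*(Y) = (f\wedge f')(\delta_{-1}Y)$ and $[f,g]_*(X) = (f\wedge g)(\delta_0 X)$. Using $\delta_0(X) = [X\otimes 1 + 1\otimes X, R^\wedge]$ and expanding $R^\wedge = (R^\wedge)^{\bullet\bullet}(\text{basis}\otimes\text{basis})$, the contraction $(f\wedge g)(\delta_0 X)$ becomes, after using invariance of the pairing and the graded Leibniz rule for $\operatorname{ad}$ and $\rhd$ on $R^\wedge$, exactly $\langle\cdot, [\varphi(\cdot),X]+[\cdot,\varphi(X)]\rangle$ evaluated appropriately — this is the 2-graded analogue of the manipulation $\langle[g,g']_*,X\rangle = \langle g\wedge g',\psi X\rangle$ in \eqref{algmap}. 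The same computation in the $\delta_{-1}$ sector gives the $\g_{-1}$-component. This is essentially a bookkeeping exercise in index gymnastics, careful about which tensor slot of $R^\wedge$ lives in $\g_0$ versus $\g_{-1}$; the sign conventions in the graded Schouten bracket are where one must be careful.

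\textbf{Step 2: deduce that $[\cdot,\cdot]_R$ is a Lie 2-algebra bracket.} Once \eqref{Rbracket} is established, the Koszul conditions and 2-Jacobi identities for $[\cdot,\cdot]_R$ are \emph{equivalent} to the 2-cocycle conditions \eqref{2cocycoh} and the 2-cobracket conditions \eqref{2cobracket} for $\delta = dR^\wedge$ — precisely the statement, quoted in the text after \eqref{2cocy}, that "the 2-cocycle conditions \eqref{2cocycoh} imply precisely the Koszul identities of $[\cdot,\cdot]_*$ \cite{Bai_2013}" and that $\delta = dR^\wedge$ defines a Lie 2-bialgebra. Since $R$ solves the modified 2-CYBE \eqref{2cybe} by hypothesis, $\delta = dR^\wedge$ is a genuine 2-cobracket (this is the content of the discussion around \eqref{2cocy}), so the dual bracket $[\cdot,\cdot]_*$ on $\g^*[1]$ is a Lie 2-algebra bracket, and transporting it back through the nondegenerate pairing $\langle\cdot,\cdot\rangle_\odot$ gives that $[\cdot,\cdot]_R$ on $\g$ is one as well. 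The chain-map property $t\varphi_{-1}=\varphi_0 t$ from the previous proposition is what ensures the transported structure is compatible with the crossed-module map $t$, i.e. that we land in a Lie 2-algebra and not merely a graded antisymmetric algebra.

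\textbf{Main obstacle.} I expect the genuinely delicate part to be Step 1 — matching the graded $r$-matrix contraction to the $\varphi$-twisted bracket while keeping the degree assignments and the graded signs consistent. In particular one must check that the mixed term $[f,g]_*$ really only feels the component of $R^\wedge$ in $\g_0\otimes\g_{-1}\oplus\g_{-1}\otimes\g_0$ and reproduces $\varphi_0$ acting on the $\g_0$-slot and $\varphi_{-1}$ on the $\g_{-1}$-slot, and that no spurious $D_t^+\r$-type corrections appear — here the decomposition \eqref{2Rmatdecomp} and the identity $D_t^- D_t^+ = 0$ are the relevant facts. Once the dictionary in Step 1 is pinned down, Step 2 is a citation to the established equivalence between 2-cocycle/2-cobracket conditions and Koszul/2-Jacobi identities.
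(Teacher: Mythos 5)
Your proposal is correct and follows essentially the same route as the paper: establish the duality formula \eqref{Rbracket} by contracting the coboundary $\delta = dR^\wedge$ against the $_2\operatorname{ad}$-invariant pairing $R^\odot$, then transfer the Lie 2-algebra axioms from the dual bracket (whose Koszul/2-Jacobi identities are guaranteed by the 2-cocycle and 2-cobracket conditions), with the chain-map property $t\varphi_{-1}=\varphi_0 t$ handling compatibility with $t$. The paper carries out the Step-1 index computation explicitly and argues equivariance/Peiffer for $[\cdot,\cdot]_R$ slightly more directly from those of $[\cdot,\cdot]$, but the underlying argument is the same.
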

\begin{proof}
    Recall \cite{Bai_2013} that the skew-symmetric piece $R^\wedge$ of a solution $R$ to the 2-CYBE \eqref{2cybe} defines the cobracket $dR^\wedge(Y+X)=\delta(Y+X) =\delta_{-1}(Y)+\delta_0(X)$ given by 
    \begin{equation*}
        \delta_{-1}(Y) = [Y\otimes 1+1\otimes Y,R^\wedge],\qquad \delta_0(X) = [X\otimes 1+1\otimes X,R^\wedge],
    \end{equation*}
    and the symmetric piece $R^\odot=\langle-,-\rangle$ defines a $_2\operatorname{ad}$-invariant pairing. These facts allow us to compute directly (cf. \cite{Meusburger:2021cxe}) that, for each basis element $Z_i=S_i+T_i\in\mathfrak{g}$,
    \begin{eqnarray*}
        [h,h'](Z_i) &=&  \langle h\otimes h',\delta(Z_i)\rangle = \langle h\otimes h',[Z_i\otimes 1+1\otimes Z_i,R^\wedge]\rangle \\
        &=& (R^\wedge)^{jk}\langle h\otimes h',[Z_i,Z_j]\otimes Z_k + Z_j\otimes [Z_i,Z_k]\rangle \\
        &=& (R^\wedge)^{jk}\left(R^\odot(Z,[Z_i,Z_j])R^\odot(Z',Z_k) + R^\odot(Z,Z_j)R^\odot(Z',[Z_i,Z_k])\right),\\
        &=& - (R^\wedge)^{jk}\left(R^\odot([Z,Z_j],Z_i)R^\odot(Z',Z_k)+R^\odot(Z,Z_j)R^\odot([Z',Z_k],Z_i)\right) \\
        &=& R^\odot([Z,\varphi(Z')],Z_i) + R^\odot([\varphi(Z),Z'],Z_i) \\
        &=& R^\odot([Z,Z']_R,Z_i) = \langle Z_i,[Z,Z']_R\rangle,
    \end{eqnarray*}
    where we  abbreviated the graded elements $h=g+f,h'=g’+f’\in\mathfrak{g}^*$ and used that $Z^{(')}\equiv \langle h^{(')}, \rangle\in \g$. This proves \eqref{Rbracket}.

    Now let us establish that $[-,-]_R$ is a genuine $L_2$-bracket on $\mathfrak{g}$. Since $[-,-]$ by hypothesis is equivariant and satisfies the Peiffer identity with respect to $t$, the fact that $\varphi$ is a 2-vector space homomorphism implies the same for $[-,-]_R$. It thus suffices to check the 2-Jacobi identities for $[-,-]_R$, but this directly follows from \eqref{Rbracket} (cf. \cite{kosmann1997integrability}),
    \begin{eqnarray*}
        \langle Z_0,\circlearrowright [[Z,Z']_R,Z'']_R\rangle =~ (\circlearrowright [[h,h'],h''])(Z_0) =0\qquad \forall Z_0\in\g.
    \end{eqnarray*}
\end{proof}

Extending $\varphi$ to a dg linear map on $\operatorname{CE}(\g^*[1])$ in an obvious way, the bracket $[-,-]_R$ extends correspondingly to $C(B(\g^*[1]))$.
\begin{lemma}\label{prop:2poismainbis}
Give a solution $R$ of the 2-CYBE, the Poisson bracket $\{-,-\}_R^*$, defined by the following formula
\begin{equation}
    \{\phi,\phi'\}_R^* = \langle -, [d\phi,d\phi']_R\rangle,\qquad \phi,\phi'\in  C(B(\g^*[1])) \label{2poisdualR},  
\end{equation}
is a 1-shifted Poisson structure on $B(\g^*[1])$.  We call this a {\bf 2-Kirillov-Kostant (2KK) $R$-bracket} on $C(\g^*[1])$. 
\end{lemma}
\begin{proof}
    This follows from the fact that $[-,-]_R$ is a $L_2$-bracket, hence the proof of \textbf{Proposition \ref{prop:2poismain}} applies. 
\end{proof}

\subsection{Canonical 2-Lax pair on $\g^*[1]$}\label{2laxconstruction}
Fix a $_2\operatorname{Ad}^*$-invariant Hamiltonian $H\in C(B(\g^*[1]))$ (as defined in Definition \ref{def:invariant}). We are now ready to finally construct a canonical 2-Lax pair $(L,P)$ on $(B(\g^*[1]),\{-,-\}_R^*,H)$ in this section according to \eqref{2grlax}, based on the 2-KK Poisson structure $\{-,-\}_R^*$ \eqref{2poisdualR} as well as the underlying classical 2-$r$-matrix. 

Following \cite{Meusburger:2021cxe}, we will specify $L,P$ linearly on the subspace $\g \subset C^\bullet(B(\g^*[1])) = \operatorname{CE}(\g^*[1])$. We take (note the subscripts denote the \textit{Lie 2-algebra} degree, not the total degree) 
\begin{equation*}
    L = L_{-1} + L_0,\qquad P=P_{-1} + P_0,
\end{equation*}
with
\begin{align}
    L_0&\in \g_0\otimes\mathfrak{g}_0,\qquad L_{-1} \in \g_{-1}\otimes \g_{-1},\nonumber\\
    P_{-1}&\in \g_0\otimes\mathfrak{g}_{-1},\qquad P_0\in \g_{-1}\otimes\mathfrak{g}_0\,.\nonumber
\end{align}
Notice $L$ has even total degree and $P$ has odd total degree in the complex \eqref{tensorxmod}. An incarnation of this splitting of the 2-Lax pairs will also appear in \S \ref{2laxzero} later.

To write down the explicit form of these maps $L,P \in C(\g^*[1])\otimes\g$, we will leverage the canonical degree-1 evaluation pairing $\langle-,-\rangle$ to identify $\g_0\cong \g_{-1}^*$ and $\g_{-1}\cong \g_0^*$. Fix bases $\{T_i\}_i,\{S_a\}_{a}$ of $\mathfrak{g}_0,\mathfrak{g}_{-1}$, and suppose the classical 2-$r$-matrix $R$ on $\g$ is invertible. We make use of a basic linear algebra fact \cite{LU2002119} that the inverse of an off-diagonal block matrix, such as $R$ where the off-diagonal pieces are given by $R_1,R_2$, is the off diagonal matrix with blocks  $R_2^{-1}$ and $R_1^{-1}$,
and hence the inverse of the symmetric piece $(R^\odot_1)_{ai}$, for instance, has matrix elements $ 
((R^\odot_2)^{-1})^{ai}$. We define, for $f\in \g_{-1}^*\cong \g_0$ and $g\in\g_0^*\cong \g_{-1}$, the following formulas
\begin{eqnarray}
    L_0= (R^\odot_2)^{ai}\langle -,S_a\rangle T_i: f\mapsto (R^\odot_2)^{ai}f(S_a)T_i,&\qquad& P_{-1} = \varphi_{-1}(dH): f\mapsto \langle f,\varphi(dH)\rangle,\nonumber\\
    L_{-1}=(R^\odot_1)^{ia}\langle-, T_i\rangle S_a: g\mapsto (R^\odot_1)^{ia}g(T_i)S_a,&\qquad& P_0 = \varphi_0(dH): g\mapsto \langle g,\varphi(dH)\rangle\,.\label{2lax}
\end{eqnarray}
We will now show that $(L,P):\mathfrak{g}^*[1]\rightarrow\mathfrak{g}$ is indeed a 2-Lax pair as in {\bf Definition \ref{2grlax}}. 

Following the notation in \S \ref{2grpois2alg}, we let $\textbf{t}$ denotes the differential on the sheaf $C(B(\g^*[1]))$ defining the dg manifold $B\g*[1].$ 
\begin{theorem}\label{construct2lax}
Let $(\g;R)$ denote a Lie 2-bialgebra equipped with a 2-graded classical $r$-matrix, and let $H\in C(B(\g^*[1]))$ denote an invariant Hamiltonian on the shifted dual complex $\g^*[1]$. Then $(L,P)$ given in \eqref{2lax} is a 2-Lax pair of the 2-graded Hamiltonian system $(B(\g^*[1]),\{-,-\}^*_R,H)$ for which the Lax potential $L$ satisfies
\begin{equation}
    (1\otimes t - \textbf{t}\otimes 1)L_{-1} = 0,\qquad \{L,L\}^*_R = [L\otimes 1 + 1\otimes L,R^\wedge], \label{2laxcond}
\end{equation}
where we have extended the $L_2$-bracket $[-,-]$ to $\mathfrak{g}^{\otimes 2}$.
\end{theorem}
\begin{proof}
First we compute the coefficients $$(dL_{-1})^i = R^\odot_2{}^{bi}S_b,\qquad (dL_0)^a = R^\odot_1{}^{ja}T_j.$$ We note also that the $_2\operatorname{Ad}^*$-invariance of $H$ \eqref{adinv} implies, in particular, that
\begin{equation*}
    [ {Y+X},dH]=0,\qquad  \forall\, Y\in\g_{-1},\, \forall X\in \g_0,
\end{equation*}
 (we emphasize we use the bracket $[-,-]$ here and not $[-,-]_R$). 

Then from the 2-KK Poisson structure \eqref{2poisdualR} we have
\begin{eqnarray}
    \{H,L\}_R^* &=& \langle -,[dH,dL^{i,a}]_R\rangle (T_i\oplus S_a) \nonumber\\
    &=&  \langle -,[dH,dL^i_{-1}]_R\rangle T_i+ \langle -,[dH,dL_0^a]_R\rangle S_a \nonumber\\
    &=& \langle -,[\varphi_{-1}(dH),dL_{-1}^i] + [ dH,\varphi_{-1}(dL^i_{-1})]\rangle T_i \nonumber \\
    &\qquad& +~  \langle -,[\varphi_0(dH),dL_0^a] + [ dH,\varphi_0(dL_0^a)]\rangle S_a \nonumber\quad \textrm{\tiny use invariance of Hamiltonian}\\
    &=& \langle -,[\varphi_{-1}(dH),R^\odot_2{}^{bi}S_b]\rangle T_i +  \langle -,[\varphi_0(dH),R^\odot_1{}^{ja}T_j] \rangle S_a \nonumber\\
    &=& -R^\odot_2{}^{bi}\langle -,S_b\rangle [\varphi_{-1}(dH),T_i] - R^\odot_1{}^{ja}\langle -,T_i\rangle [\varphi_0(dH),S_a] \nonumber\\
    &=& [L,P],\nonumber
\end{eqnarray}
where we have used the invariance of the 2-Casimir $R^\odot$. This proves the first statement. 

To prove the second statement, we first note that we have the following expressions
\begin{equation*}
    \varphi_{-1}(S_a) = (R^\odot_1)_{ai}(R^\wedge_2{})^{ic}S_c,\qquad \varphi_0(T_i) = (R^\odot_2)_{ib}(R^\wedge_1{})^{bj}T_j\nonumber
\end{equation*}
for $\varphi$. Hence by a direct computation,
\begin{eqnarray}
    \{L,L\}_R^* &=& \{L^{a,i},L^{b,j}\}^*(S_a+T_i)\otimes (S_b+T_j) \nonumber\\
    &=& \langle -,[dL^{a,i},dL^{b,j}]_R\rangle(S_a+T_i)\otimes (S_b+T_j)\nonumber\\
    &=& (R^\odot_2{}^{ai'}+R^\odot_1{}^{ia'})(R^\odot_2{}^{bj'}+R^\odot_1{}^{jb'})\langle -,[\varphi_0 T_{i'}+ \varphi_{-1} S_{a'},T_{j'}+S_{b'}] \nonumber\\
    &\qquad&+~[T_{i'}+S_{a'},\varphi_0 T_{j'} + \varphi_{-1} S_{b'}]\rangle (S_a+T_i)\otimes (S_b+T_j) \nonumber\\
    &=& (R^\wedge_1{}^{al} + R^\wedge_2{}^{ic})(R^\odot_2{}^{bj'}+R^\odot_1{}^{jb'})\langle -,[T_l+S_c,T_{j'}+S_{b'}]\rangle ((S_a+T_i)\otimes (S_b+T_j)) \nonumber\\
    &\qquad& + (R^\odot_2{}^{ai'}+R^\odot_1{}^{ib'})(R^\wedge_1{}^{bm} + R^\wedge_2{}^{jd})\langle -,[T_{i'}+S_{a'},T_m + S_d] \rangle ((S_a+T_i)\otimes (S_b+T_j)) \nonumber\\
    &=& - (R^\wedge_1{}^{al} + R^\wedge_2{}^{ic})(R^\odot_2{}^{bj'}+R^\odot_1{}^{jb'})\langle -,T_{j'}+S_{b'}\rangle ((S_a+T_i)\otimes [T_l+S_c,S_b+T_j]) \nonumber\\
    &\qquad& - (R^\odot_2{}^{ai'}+R^\odot_1{}^{ib'})(R^\wedge_1{}^{bm} + R^\wedge_2{}^{jd})\langle -,T_{i'}+S_{a'} \rangle ([S_a+T_i,T_m + S_d]\otimes (S_b+T_j)) \nonumber\\
    &=& - (R^\wedge_1{}^{al} + R^\wedge_2{}^{ic})((S_a+T_i)\otimes [T_l+S_c,L]) \nonumber\\
    &\qquad& - (R^\wedge_1{}^{bm} + R^\wedge_2{}^{jd})([L,T_m + S_d]\otimes (S_b+T_j)) \nonumber\\
    &=& -[r,L\otimes 1 + 1\otimes L].
\end{eqnarray}
Now note that the differential $\textbf{t}$ on $B(\g^*[1])$ acting at degree 1 is simply the Lie 2-algebra differential $t$. Whence we have, finally,
\begin{eqnarray*}
    (\textbf{t}\otimes 1)L_{-1} &=&  (R^\odot_2)^{ai}\langle t-, T_i\rangle S_a =(R^\odot_2)^{ai}\langle t-, S_a\rangle T_i  \nonumber\\
    &=& (R^\odot_1)^{ia}\langle -,tS_a\rangle T_i =(R^\odot_1)^{ia}\langle -,T_i\rangle tS_a = (1\otimes t)L_0\nonumber    
\end{eqnarray*}
as desired, where we have used the symmetry, as well as the property \eqref{tsymm} of the quadratic 2-Casiimr $K=R^\odot$.
\end{proof}
\noindent The special properties that the 2-Lax potential $L$ satisfies in this case allows us to prove the following.

\begin{corollary}\label{induced1lax} 
    Given the hypotheses of \textbf{Theorem \ref{construct2lax}}, the canonical 2-Lax pair \eqref{2lax} induces an ordinary Lax pair $({\underline{L}},\underline{P})\in C(\g^*_0)\otimes \g_0$  on the Hamiltonian system $(\g_0^*,\{-,-\}_0^*,H)$ at degree-0.
\end{corollary}
\begin{proof}
    Recall that we have extended the $t$-map to act on all of $\g$, such that $t(Y+X) = tY$ for each $X\in\g_0,Y\in\g_{-1}$ \cite{Chen:2012gz}. Whence we have, for instance, $(1\otimes t)L = (1\otimes t)L_{-1}$.
    
    First, let us apply the $t$-map on the values $\g$ of $L$ to \eqref{2grlax}. This gives (neglecting the pullback $\gamma^*$)
    \begin{align*}
        (1\otimes t)\dot L &= (1\otimes t)([L,P]) =[L_0,(1\otimes t)P_{-1}] +[(1\otimes t)L_{-1},P_0],
    \end{align*}
    where we have used the Leibniz rule for $t$. By applying $ \textbf{t}\otimes 1 $, we see that the first term vanishes due to degree reasons, hence
    \begin{align*}
        (\textbf{t}\otimes t)\dot{L}_{-1}=(\textbf{t}\otimes 1)[(1\otimes t)L_{-1},P_0]=[(\textbf{t}\otimes t)L_{-1},(\textbf{t}\otimes 1)P_0]
    \end{align*}
    by the fact that $\textbf{t}$ is a map of dgca's.

    We now prove that the dynamics $\dot{\underline{L}}$ is generated by the restriction $H|_{B\g^*_0}$. By applying $(\textbf{t}\otimes 1)$ to \eqref{2grlax} and go through the same computation, we have\footnote{Note the pullback $\gamma^*$ is a map of dgca's, and hence commutes with the differential $\textbf{t}$.}
    \begin{equation*}
        (\textbf{t}\otimes 1)\dot L = (\textbf{t}\otimes 1)\{H,L\}_R^* =\{(\textbf{t}\otimes 1)H,L\}_{R}^* + \{H,(\textbf{t}\otimes 1)L_{-1}\}_{R}^*,
    \end{equation*}
    where we have used the Leibniz rule of $\textbf{t}$ with respect to the 2-KK Poisson structure \eqref{2poisdualR}. Here, we apply instead $ 1\otimes t $, from which we see that the first term vanishes by degree reasons, whence we have
    $$(\textbf{t}\otimes t)\dot L_{-1} =  \{H,(\textbf{t}\otimes t)L_{-1}\}_{R}^*$$
    as $t$ does not act on $H$.

    These two computations imply that we obtain well-defined Lax operators $\underline{{L}} = (\textbf{t}\otimes t){L},\,\,\underline{P} = (\textbf{t}\otimes 1)P_0$ satisfying
    \begin{equation*}
        \dot{\underline{L}} = \{H,\underline{L}\}_{R}^* = [\underline{L},\underline{P}],
    \end{equation*}
    completing the proof.   
\end{proof}
\noindent In other words, the projection to degree 0 defines a functor $\mathsf{Lie2Bialg}\to \mathsf{LieBialg}$ which sends canonical 2-Lax pairs \eqref{2lax} to canonical Lax pairs \eqref{lax}.

Now it is known \cite{Bai_2013} that this functor admits a left adjoint $\mathsf{LieBialg}\to \mathsf{Lie2Bialg}$ given by the trivial Lie 2-bialgebra $\mathsf{g}\mapsto \big(\operatorname{id}_\mathsf{g}=\mathsf{g}\xrightarrow{t=\id} \mathsf{g}\big)$. We thus have the following.

\begin{proposition}\label{ident-t}
Let $\mathsf{g}$ be a Lie (1-)bialgebra (with a classical $r$-matrix). If $(\hat L,\hat P)\in C(\mathsf{g}^*)\otimes\mathsf{g}$ form a (canonical) Lax pair on $\mathsf{g}^*$ (as in \eqref{lax}), then the following graded functions $L=\hat L\oplus \hat L,\,P=\hat P\oplus \hat P$ consisting of two copies of the original Lax pair, is a (canonical) 2-Lax pair on $C(B(\g^*[1]))$ where $\g=\operatorname{id}_\mathsf{g}=\operatorname{id}_{\mathsf{g}}:\mathsf{g}\xrightarrow{\id}\mathsf{g}$.
\end{proposition}

We will now switch gears a bit in \S \ref{zero2curv} in order to introduce a higher-homotopy generalization $\widehat{\Sigma_s\g}$ of the Kac-Moody algebra $\widehat{\g_k}$ described in \S \ref{wzw}. This is to setup the contents of \S \ref{2lax2km}, where we develop an application of the theory of 2-graded integrability \eqref{2grlax} above to a 3d topological-holomorphic field theory studied in \cite{Chen:2024axr}.

\section{An affine Kac-Moody 2-algebra}\label{zero2curv}
We begin with a \textit{balanced} (see \textbf{Definition \ref{balanced}}) Lie 2-algebra $\g=\g_{-1}\xrightarrow{t}\g_0$ over $\bbC$ \cite{Zucchini:2021bnn}, meaning that there is a degree-1 non-degenerate pairing $\langle-,-\rangle: \g_{-1}\odot\g_0\rightarrow\bbC$ which is invariant
\begin{equation*}
    \langle [X,X'],Y\rangle = -\langle X',X\rhd Y\rangle,\qquad \langle Y,tY'\rangle=\langle tY,Y'\rangle
\end{equation*}
for all $X,X'\in\g_0$ and $Y,Y'\in\g_{-1}$. This balanced property is the differential graded analogue of having an invariant pairing form for ordinary Lie algebras. 

In general, our "Kac-Moody 2-algebra" is a certain centrally-extended differential-graded Lie algebra (dgla) $\widehat{\mathcal{A}_s\g}$, where $s$ is a 2-cocycle on the tensor complex $\mathcal{A}\g=\mathcal{A}\otimes\g$ formed by a differential-graded commutative algebra (dgca) $\mathcal{A}$ with the Lie 2-algebra $\g =\g_{-1}\to\g_0$. Depending on the choice of $\mathcal{A}$, different Kac-Moody 2-algebras can arise (cf. \textit{Remark \ref{raviolo}}).

\subsection{Definition}\label{2kmdefinition}
As a proof of concept, we shall explicitly construct $\widehat{\mathcal{A}_s\g}$ under the "purely topological" choice given by the smooth de Rham complex $\mathcal{A}=\Omega^\bullet(\Sigma)$ on a 2d oriented surface $\Sigma$.\footnote{One says that  $\Omega^\bullet(\Sigma)$ is "purely topological" because it resolves the sheaf of locally-constant functions on $\Sigma$. As we will explain in  \S \ref{2laxintheift}, depending on the differential constraints appearing in the model under study, other more interesting sheaves of dgca's can be chosen.} 

Consider the following cochain complex
\begin{equation*}
    \Omega^{\bullet}(\Sigma,\g) = \Sigma\g_{-1}\xrightarrow{\hat d} \Sigma \g_0 \xrightarrow{\hat d} \Sigma\g_1 \xrightarrow{\hat d} \Sigma\g_2\,,
\end{equation*}
where $ \hat d = d-t$ and $\Sigma\g_p,\, p=-1,0,1,2$ consists of the graded pieces of total degree $p,$
\begin{align*}
    & \Sigma\g_{-1} = \Omega^0(\Sigma,\g_{-1}), \\
    &\Sigma\g_0 = \Omega^0(\Sigma,\g_0)\oplus  \Omega^1(\Sigma,\g_{-1}),\\
    &\Sigma\g_1 = \Omega^2(\Sigma,\g_{-1}) \oplus \Omega^1(\Sigma,\g_0),\\ 
    &\Sigma\g_2 = \Omega^2(\Sigma,\g_0).
\end{align*}
Denote by 
\begin{equation*}
    \Sigma\g^+ = \Sigma\g_0\oplus\Sigma\g_2,\qquad {\Sigma}\g^- = \Sigma\g_{-1}\oplus\Sigma\g_1
\end{equation*}
the pieces of the cochain complex $\Omega^{\bullet}(\Sigma,\g)$ with even/odd grading.\footnote{Of course, dgca's $\mathcal{A}$ in general do not terminate, but the splitting $\mathcal{A} =\mathcal{A}_+\oplus\mathcal{A}_-$ against the grading parity can always be performed.}

Analogous to the ordinary Kac-Moody algebra \eqref{kmalg}, let us introduce the following functions for 
\begin{align*}
    s^\lambda: \Sigma\g^+ \otimes \Sigma\g^+  \to \mathbb{C},\qquad s_-^\lambda: \Sigma\g^-\otimes\Sigma\g^-\to \mathbb{C}
\end{align*}
for a tuple of scalars $(\lambda_0,\lambda_1)\in\mathbb{C}$, we put
\begin{align}
    s^\lambda_+(\chi,\eta) &= \lambda\cdot \langle\langle  \chi, \hat d\eta\rangle\rangle \nonumber\\
    &= \lambda_1\int_\Sigma (\langle X, dy\rangle + \langle x, dY\rangle) - \lambda_0 \int_\Sigma \langle X, tY\rangle\,, \nonumber\\
    s^\lambda_-(\bar\chi,\bar\eta) &= \lambda\cdot \langle\langle \bar\chi, \hat \eta\rangle\rangle\nonumber\\
    & = \lambda_0\int_\Sigma(\langle  \bar X, d \bar y\rangle - \langle \bar x,d\bar Y\rangle) - \lambda_1\int_\Sigma\langle  \bar x, t\bar{ \mathsf{Y}}\rangle + \langle\bar{\mathsf{X}},t\bar{y}\rangle\,,\label{central}
\end{align}
with other components trivial (in particular, $s^\lambda$ is trivial on $\Sigma\g_2$), where 
\begin{equation}
    \chi=(\LaTeXunderbrace{x,X}_{\in\Sigma\g_0},\LaTeXunderbrace{\mathsf{X}}_{\in\Sigma\g_2})\in \Sigma\g^+ ,\qquad \bar\chi = (\LaTeXunderbrace{\bar x}_{\in\Sigma\g_{-1}},\LaTeXunderbrace{\bar X,\bar{\mathsf{X}}}_{\in\Sigma\g_1})\in \Sigma\g^-\,.
\end{equation}
We will use this notation to label elements in $\Sigma\g^+,\Sigma\g^-$, unless otherwise stated.

\medskip

By an integration by parts, 
it can be seen that the functions $s^\lambda,s_-^\lambda$ are graded skew-symmetric, but in opposite ways
\begin{equation}
    s^\lambda_+(\mathcal{X},\mathcal{Y}) = (-1)^{|\mathcal{X}|+|\mathcal{Y}|+1} s^\lambda_+(\mathcal{Y},\mathcal{X}),\qquad s_-^\lambda(\bar{\mathcal{X}},\bar{\mathcal{Y}}) = (-1)^{|\bar{\mathcal{X}}|+|\bar{\mathcal{Y}}|} s_-^\lambda(\bar{\mathcal{Y}},\bar{\mathcal{X}})\,,\label{centralgradedsymm}
\end{equation}
where $|\mathcal{X}| = (|x|,|X|) = (0,1)$ counts the form degrees. By denoting the $\mathbb{Z}_2$-parity of the underlying forms with $p=\pm$, we can then view $s^\lambda_+= s_+^{\lambda_0|\lambda_1}: \Sigma\g^+ \otimes \Sigma\g^+  \to \mathbb{C}^{1|1}$ as a function which takes values in the complex \textit{superline} $\mathbb{C}^{1|1}$; similarly for $s_-^\lambda = s_-^{\lambda_0|\lambda_1}$. We identify the scalars $\lambda_0\in\mathbb{C}^{1|0}$ as even and $\lambda_1\in\mathbb{C}^{0|1}$ as odd. 

\medskip

Consider the spaces $\Sigma\g^+\oplus s_+\bbC^{1|1}[1],\,\, \Sigma\g^-\oplus s_-\bbC^{1|1}[1]$ with $\mathbb{Z}_2$-graded central pieces generated by $s_+,s_-$, respectively, in cohmological degree -1.\footnote{This  degree essentially comes from the degree of the balancing $\langle-,-\rangle:\g^{\otimes 2}\to \mathbb{C}[1]$.}  Define the space $$\widehat{\Sigma_s\g} = \big(\Sigma\g^+\oplus s_+\bbC^{1|1}[1]\big) \oplus  \big(\Sigma\g^-\oplus s_-\bbC^{1|1}[1]\big) $$ as their direct sum. In the following, we will endow this space with the structure of a Lie 2-\textit{super}algebra.

\paragraph{The graded brackets.}
For $p=\pm$, we now introduce the 2-graded bracket structure on $\widehat{\Sigma_s\g}$,
$$[-,-]: \widehat{\Sigma_s\g}\wedge \widehat{\Sigma_s\g}\to \widehat{\Sigma_s\g}\,,$$
induced by the underlying Lie 2-algebra $\g$. For $\chi,\eta\in\Sigma\g^+$ and $\bar\chi,\bar\eta\in\Sigma\g^-$ we have
\begin{align}
    [\chi,\eta] &= \big([x,y],[X,y] + [x,Y],[x,\mathsf{Y}]+[\mathsf{X},y]\big)+ s^{\lambda_0|\lambda_1}_+(\mathcal{X},\mathcal{Y})  &\in \Sigma\g^+\oplus \bbC^{1|1} \nonumber\\
    [\bar{\chi},\bar{\eta}] &=  (0,[\bar x,\bar Y] - [\bar X,\bar y],[\bar X,\bar Y])  + s_-^{\lambda_0|\lambda_1}(\bar{\mathcal{X}},\bar{\mathcal{Y}}) &\in \Sigma\g^+\oplus \bbC^{1|1}\,.\label{bracs}
\end{align}
Note the $\mathbb{Z}_2$-graded skew-symmetry with respect to the parity of the form degrees is manifest,
$$[\chi,\eta] = (-1)^{|\chi|+|\eta|+1}[\eta,\chi],\qquad [\bar\chi,\bar\eta] = (-1)^{|\bar\chi|+|\bar\eta|}[\bar\eta,\bar\chi]\,,$$
where $|\chi| = (|x|,|X|,|\mathsf{X}|) = (0,1,2)$; see also \eqref{centralgradedsymm}). This also makes the graded Jacobi identities manifest, as follows from \eqref{2jacob}.

\medskip

One may also introduce the brackets induced by the degree -1 crossed-module bracket $[-,-]^\ovo: \g_{-1}\wedge\g_{-1}\to\g_{-1}$ defined by $[-,-]^\ovo = [t-,-]$ through the Peiffer identity \eqref{pfeif}. However, these are in fact \textit{odd} brackets,
\begin{align}
   [\chi,\eta]^\ovo&= (0,0,[X,Y]^\ovo)  &\in \Sigma\g^- \nonumber\\
   [\bar\chi,\bar\eta]^\ovo &=  \big([\bar x,\bar y]^\ovo, 0,[\bar x,\bar{\mathsf{Y}}]^\ovo + [\bar{\mathsf{X}},\bar y]^\ovo\big) &\in\Sigma\g^-,\label{bracs2}
\end{align}
which sends elements in $\Sigma\g^+$ to the dual sector $\Sigma\g^-$. In other words, the brackets $[-,-]^\ovo$ \eqref{bracs2} themselves have a parity degree attached to it. We shall think of them as separate from \eqref{bracs}.

\paragraph{The pairing form.}

Now one may wonder about the mixed even-odd brackets, which takes for instance the input $\chi,\bar\eta$ from different components of $\widehat{\Sigma_s\g}$. For this, we will induce them from a pairing form such that  the even/odd subspaces $\Sigma\g,\Sigma\g^-$ are by definition \textbf{coisotropic} in  $\widehat{\Sigma_s\g}$.
 
We define the integration pairing form $\llparenthesis -,-\rrparenthesis$ on $\widehat{\Sigma_s\g}$, which is odd in the total degree,
\begin{align}
    \llparenthesis \chi\oplus c\,|\, \bar{\chi}\oplus \bar{c}\, \rrparenthesis &= \langle\langle \chi\,|\,\bar{\chi}\rangle\rangle + (c|\bar c) \nonumber\\
    &=  \int_\Sigma \langle X,\bar X\rangle + \langle x,\bar{\mathsf{X}}\rangle + \langle\mathsf{X},\bar x\rangle + (c^*_1\bar c_1 + c_0^*\bar c_0)\,, \label{2kmpairing}
\end{align}
where $(-|-)$ is the canonical inner product on the superline $\bbC^{1|1}$ and $\langle-,-\rangle$ is the (cohomological) degree 1 pairing underlying the balanced Lie 2-algebra $\g$. 

Note the "bar" on top of $\bar c$ does not mean complex conjugation, but instead refers to the total degree parity. Clearly, the pairing form \eqref{2kmpairing} is graded symmetric
\begin{equation*}
    \llparenthesis \chi\oplus c\,|\,\bar{\chi}\oplus \bar c\, \rrparenthesis_0 = (-1)^{|\chi|+|\bar\chi|}(-1)^{|c|+|\bar c|} \llparenthesis \bar \chi  \oplus \bar c\,|\, \chi \oplus  c\rrparenthesis_0\,
\end{equation*}
under the parity of the form degrees $|\chi| = (|x|,|X|,|\mathsf{X}|) = (0,1,2)$.

\paragraph{The coadjoint action.} 
Using \eqref{2kmpairing} and \eqref{bracs}, we define the coadjoint actions by
\begin{align}
    \llparenthesis [\chi_1,\chi_2]\,|\,\bar{\chi}_3\oplus \bar c_3 \rrparenthesis &= (-1)^{|\chi_1|+|\chi_2|+1}\llparenthesis \chi_2 \,|\, \operatorname{ad}^*_{\chi_1}(\bar{\chi}_3\oplus \bar c_3)\rrparenthesis\,, \nonumber\\ 
    \llparenthesis \bar\chi_1\oplus \bar c_1 \,|\,[\bar{\chi}_2,\bar{\chi}_3] \rrparenthesis &= (-1)^{|\bar\chi_2|+|\bar\chi_3|}\llparenthesis \operatorname{ad}^*_{\bar\chi_2}(\chi_1\oplus c_1) \,|\, \bar{\chi}_3\rrparenthesis\,;\label{coadjointactions}
\end{align}
recall $[\bar\chi,\bar\eta]$ lies in the even sector $\Sigma\g$. By properties of the Hodge star, we perform a direct computation to deduce that
\begin{align}
    \operatorname{ad}^{*}_{\chi_1}(\bar\chi_3\oplus \bar c_3) &= \big([x_1,\bar x_3],[x_1,\bar{X}_3]+ \lambda_1(\bar c_3)_1^*dx_1- \lambda_0(\bar c_3)_0^*tX_1, \nonumber\\
    &\qquad \qquad  [x_1,\bar {\mathsf{X}}_3] + [X_1,\bar X_3] + [\mathsf{X}_1,\bar x_3] -\lambda_1(\bar c_3)_1^*dX_1\big) + 0, \label{evencoad1}\\
    \operatorname{ad}^*_{\bar \chi_2}(\chi_1\oplus c_1) &= \big(0,[\bar X_2, x_1]+ [\bar x_2, X_1]+(\bar c_1)_0^*\lambda_0 d\bar x_2,\nonumber\\
    &\qquad\qquad [\bar X_2, X_1] + (\bar c_1)_0^*\lambda_0d\bar X_2-(\bar c_1)_1^*\lambda_1t\bar x_2\operatorname{vol}_\Sigma\big)+0,\label{evencoad2}
\end{align}
where the trailing "$+0$" indicates the absence of central terms in the coadjoint action.

Similarly, from the crossed-module brackets \eqref{bracs2} we can also compute
\begin{align*}
    \,^\ovo\operatorname{ad}^*_{\chi_1}(\chi_3\oplus c_3) &= \big(0,[tX_1,x_3],0\big)+0, \\
    \,^\ovo\operatorname{ad}^*_{\bar \chi_2}(\chi_1\oplus c_1) &=([t\bar x_2,x_1],0,[t\bar x_2,\mathsf{X}_1]+ [t\bar{\mathsf{X}}_2,x_1]\big)+0\,.  \label{oddcoad}
\end{align*}
By the equivariance of $t$, \eqref{pfeif}, these brackets in fact all lie in the image of $t$,
\begin{align*}
    \,^\ovo\operatorname{ad}^*_{\chi_1}(\chi_3\oplus c_3)&= \big(0,t[X_1,x_3],0\big)+0 \\ 
    \,^\ovo\operatorname{ad}^*_{\bar \chi_2}(\chi_1\oplus c_1)&=(t[\bar x_2,x_1],0,t[\bar x_2,\mathsf{X}_1]+ t[\bar{\mathsf{X}}_2,x_1]\big)+0 \,.
\end{align*}
These contributions therefore vanish if one projects onto cohomology.

\subsection{Kac-Moody 2-algebra as a dg central-extension}\label{2kmproofsec}
Let us now put all of the above structures together. Recall the differential $\hat d=d-t$ on $\Omega^\bullet(\Sigma,\g)=\Sigma\g^+\oplus \Sigma\g^-$ is off-diagonal, and is given by
\begin{equation*}
    \hat d: \begin{pmatrix}
        \chi \\ 
        \bar\chi
    \end{pmatrix}\mapsto\begin{pmatrix}
(- t\bar x,d\bar x,d\bar X - t\bar{\mathsf{X}})\\
                (0,dx-tX,dX)\,
    \end{pmatrix}
\end{equation*}
where $t$ acts trivially on $\g_0$ and $d$ acts trivially on $\Omega^2(\Sigma)$. We now prove the main result of this section.
\begin{theorem}\label{2kmproof}
    Put $s=s_++s_-$. The \textbf{Kac-Moody 2-algebra $(\widehat{\Sigma_s\g},\hat d)$} defined in \S \ref{2kmdefinition}, with the bracket structures \eqref{bracs}, \eqref{evencoad1}, \eqref{evencoad2}, is a dg Lie algebra central extension $$s\bbC^{1|1}[1]\rightarrow \widehat{\Sigma_s\g}\rightarrow\Sigma\g^+\oplus \Sigma\g^-,$$ where $\bbC^{1|1}[1] = \bbC^{1|1}\xrightarrow{0}0$ is the shifted superline $\mathbb{C}^{1|1} = \bbC^{1|0} \oplus\bbC^{0|1}$ as a trivial $\Sigma\g$-module. 
    Moreover, $\widehat{\Sigma_s\g}^\pm$ are coisotropic under the pairing \eqref{2kmpairing}.
\end{theorem}
\begin{proof}
The graded skew-symmetry of the brackets \eqref{bracs} is clear. We proceed by first proving the following lemma.
    \begin{lemma}\label{tequiv}
    $\hat d=d-t$ satisfies the graded Leibniz rules against \eqref{bracs}.
\end{lemma}
\begin{proof}
Consider the non-central terms first. We have
    \begin{align*}
        [\hat d\bar\chi,\eta] = ([-t\bar x,y],[d\bar x,y]+[-t\bar x,Y],[-t\bar x,\mathsf{Y}] + [d\bar X-t\bar{\mathsf{X}},y]).
    \end{align*}
        On the other hand we have
    \begin{align*}
        [\bar\chi,\hat d\eta] =(0,[\bar x,dy-tY],[\bar X,dy-tY]),
    \end{align*}
    hence adding them gives, from \eqref{evencoad2},
    \begin{align*}
        [\hat d\bar\chi,\eta]+(-1)^{|\bar\chi|}[\bar\chi,\hat d\eta] &= \Big([-t\bar x,y],[d\bar x,y]+[\bar x,dy] -([t\bar x,Y]+ [\bar x,tY]),\\
        &\qquad\qquad [d\bar X,y] - [\bar X,dy] - ([t\bar{\mathsf{X}},y]+ [\bar{\mathsf{X}},ty])\Big) \\ 
        &= \Big(-t[\bar x,y], d[\bar x,y] - 0, d[\bar X,y]- 0\Big) \\
        &= \hat d\operatorname{ad}^*_{\bar\chi}(\eta),
    \end{align*}
    where we have used the product rule, the equivariance of $t$ and the Peiffer identity $[t\bar x,Y]+[\bar x,tY]=0$ \eqref{pfeif}. Similarly, we have
    \begin{equation*}
        [\hat d\chi,\bar\eta]+(-1)^{|\chi|}[\chi,\hat d\bar\eta] =  \hat d\operatorname{ad}^*_{\chi}(\bar\eta)
    \end{equation*}
    from \eqref{evencoad1}.

    Now let us turn to the central pieces $s^\lambda_+,s_-^\lambda$ in \eqref{central}. We first have
    \begin{equation*}
        s_+^\lambda(\hat d\bar\chi,\eta) =\lambda_1 \int_\Sigma\langle d\bar x,dy\rangle + \langle-t\bar x,dY\rangle  -\lambda_0\int_\Sigma\langle d\bar x,tY\rangle,
    \end{equation*}
    while
    \begin{equation*}
        s_-^\lambda(\bar\chi,\hat d\eta) = \lambda_0\int_\Sigma \langle \bar x,d(dy-tY)\rangle - \lambda_1\int_\Sigma\langle \bar x,tdY\rangle\,.
    \end{equation*}
    Provided $\Sigma$ has no boundary, $\langle d\bar x,dy\rangle = d\langle \bar x,dy\rangle$ is exact and hence vanishes under the integral by Stokes's theorem. We thus have
    \begin{align*}
        s_+^\lambda(\hat d\bar\chi,\eta)+ (-1)^{|\bar\chi|}s_-^\lambda(\bar\chi,\hat d\eta) &= -\lambda_0\int_\Sigma \langle t\bar x,dY\rangle - \langle \bar x,tdY\rangle \\ 
        &\qquad\qquad -\lambda_1\int_\Sigma\langle d\bar x,tY\rangle + \langle \bar x,d(tY)\rangle\\
        &= -\lambda_1\int_\Sigma d\langle \bar x,tY\rangle =0,
    \end{align*}
    where the first term vanishes by the equivariance $\langle t-,-\rangle = \langle -,t-\rangle$.
\end{proof}

Now the graded Jacobi identity for the non-central pieces is clear from \eqref{bracs}. For the central terms, this is equivalent to the \textit{graded 2-cocycle condition} for $s_+^\lambda, s_-^\lambda$, which one can directly compute to verify. The even/odd subcomplexes \begin{equation}
    \widehat{\Sigma_s\g}^+ = \Sigma\g^+\oplus s_+\mathbb{C}^{1|1}[1],\quad \widehat{\Sigma_s\g}^- = \Sigma\g^-\oplus s_-\mathbb{C}^{1|1}[1]\subset\widehat{\Sigma_s\g}\label{coisotropic}
\end{equation} 
are by construction coisotropic under \eqref{2kmpairing}.

\end{proof}
\noindent The Kac-Moody 2-algebra $(\widehat{\Sigma_s\g})$ may therefore be understood as a dg Manin triple/\textbf{2-Drinfel'd double} \cite{Bai_2013,Alfonsi_2023,chen:2022}. In analogy with \eqref{doubleCYB}, the fact that the associated projectors in a 2-Drinfel'd double $\d=(\g_+,\g_-,\langle-,-\rangle)$ --- where $\langle-,-\rangle$ is a \textit{degree -1} non-degenerate coisotropic pairing --- give rise to a solution $R$ of the 2-graded Yang-Baxter equation was known in \cite{Bai_2013}.

It is interesting that, in contrast to the usual Kac-Moody case in \S \ref{wzw}, each of the coisotropic subspaces themselves hosts central extensions $s_+,s_-$. However, it is important to notice that, due to the parity of the brackets \eqref{bracs}, only the \textit{positive}/chiral coisotropy subcomplex $\widehat{\Sigma_s\g}^+$ forms a dg Lie subalgebra.

\section{2-Lax pairs and the affine Kac-Moody 2-algebra}\label{2lax2km}
As recalled in \S \ref{wzw}, the Lax equation 
valued in the affine Kac-Moody algebra  can be rewritten as a 2d flatness condition 
\begin{equation*}
    \partial_t A_x - \partial_x A_t + [A_x,A_t]=0
\end{equation*}
for a {\bf Lax connection} $A = A_tdt+A_xdx$, where $A_x\oplus 1= L$ and $A_t\oplus 0=P$. Moreover, we have seen that these Lax pairs manifest themselves naturally in the 2d Wess-Zumino-Witten model. 

The goal of this section is to demonstrate how the derived Kac-Moody current 2-algebra underpins the 2-graded Lax integrability formalism introduced in \S \ref{2integra}. We show that the coadjoint structure \eqref{evencoad1} of the Kac-Moody current 2-algebra allows us to recover the 2-Lax equation in terms of a zero 2-curvature condition. 

To emphasize how the \textit{derived}/higher-homotopy nature our formalism arises, we will explicitly work out an example of a 3d topological-holomorphic field theory which, like the 2d Wess-Zumino-Witten, is 2-Lax integrable. 

\subsection{Zero 2-curvature representation of the 2-Lax equation}\label{2laxzero}
Suppose the tuple $\lambda=(\lambda_0,\lambda_1) = (1,1)$ in the central extension cocycle $s=s^{\lambda}$ is unity. We begin by taking the following coisotropic elements in $\widehat{\Sigma_s\g}$,
\begin{equation}
    P=(U_0,V_1,0)\oplus 0\in \Sigma\g^+ \oplus \mathbb{C}^{1|1}[1],\qquad L = (0,U_1,V_0)\oplus 1\in \Sigma\g^- \oplus \mathbb{C}^{1|1}[1],\label{laxpres}
\end{equation} 
for which
\begin{align*}
    U_0\in \Omega^0(\Sigma,\g_0),&\qquad V_0\in \Omega^2(\Sigma,\g_{-1}) \\
    V_1\in \Omega^1(\Sigma,\g_{-1}),&\qquad U_1\in\Omega^1(\Sigma,\g_0),
\end{align*}
with $L$ carrying a unit central element $\bar c=1\oplus 1\in\bbC^{1|1}$. Note $P,L$ have \textit{distinct parity} in their grading, in parallel with the degree splitting of the canonical 2-Lax pair \eqref{2lax}. 

Now on the 3-manifold $M= \Sigma\times \mathbb{R}$, we pick an integral curve $\gamma:[0,1]\to M$  generated by the timelike vector $\partial_\tau$ along $\mathbb{R}$ direction as in \S \ref{wzw}.\footnote{In this particular case, we shall simplify our notation $\partial_t\gamma^*L$ to $\dot L$.} The 2-Lax equation \eqref{2grlax} then can be computed from \eqref{evencoad1},
\begin{align*}
    \dot{L} &= (0,\partial_\tau U_1,\partial_\tau V_0) = \operatorname{ad}^*_PL \\
    &=\big(0,[U_0,U_1]+ dU_0- tV_1,[U_0,V_0] + [V_1,U_1] - dV_1\big)
\end{align*}
which on a local complex chart $(z,\bar z)$ of $\Sigma$, we can rename these as
\begin{align*}
    & U_1 = A_zdz + A_{\bar z}d\bar z, \qquad V_0 = B_{z\bar z}dz\wedge d\bar z, \\
    & U_0 = A_\tau, \qquad V_1 = B_{z\tau }dz  + B_{\bar z\tau }d\bar z
\end{align*}
to deduce the following equations
\begin{align*}
    \partial_\tau A_z &- \partial_z A_\tau + [A_\tau,A_z] = tB_{z\tau}, \\
    \partial_\tau A_{\bar z} &- \partial_{\bar z} A_\tau + [A_\tau,A_{\bar z}] = tB_{\bar z\tau},\\
    \partial_\tau B_{z\bar z} &- \partial_{\bar z} B_{z\tau } + \partial_zB_{\bar z\tau } \nonumber\\
    &\qquad + [A_\tau,B_{z\bar z}] - [A_{\bar z},B_{z\tau }] + [A_z,B_{\bar z\tau}] = 0.
\end{align*}
Aside from the singular leftover fake-flatness equation on the restriction $\iota_\Sigma^* (A,B)$ on $\Sigma\xhookrightarrow{\iota_\Sigma} \Sigma\times\R$,
\begin{equation}
    \partial_zA_{\bar z} - \partial_{\bar z}A_z + [A_z,A_{\bar z}] = tB_{z\bar z}, \label{2dfakeflat}
\end{equation}
these constitute nothing but the fake- and 2-flatness conditions 
\begin{equation*}
    dA + \frac{1}{2}[A,A] = tB,\qquad dB +[ A, B ]= 0
\end{equation*}
for a 2-connection $(A,B)$ on $\Sigma\times \mathbb{R}$, where 
\begin{align*}
    A &= A_\tau d\tau + A_zdz +A_{\bar z}d\bar z \in \Omega^1(\Sigma\times\mathbb{R})\otimes\g_0\\
    B&= B_{z\bar z}dz \wedge d\bar z + B_{z\tau}dz\wedge d\tau  + B_{\bar z \tau }d\bar z\wedge d\tau \in \Omega^2(\Sigma\times\mathbb{R})\otimes\g_{-1}.
\end{align*}

\begin{theorem}
    In the case of the affine Kac-Moody 2-algebra $\widehat{\Sigma_s\g}$, the 2-Lax equations \eqref{2grlax} along a timelike integral curve can be rewritten as the flatness conditions of a 2-connection $(A,B)$ on $\Sigma\times\mathbb{R}$ (aside from \eqref{2dfakeflat}).
\end{theorem}

Similar to the ordinary Lax equation, the 2-Lax equation has an inherent notion of "time" $\tau$ that runs normal to the Riemann surface $\Sigma$. More generally, such a "notion of time" on 3-manifolds $M^3$ is formalized as a {\it transverse holomorphic foliation} (THF) \cite{Aganagic:2017tvx,Jurco:2018sby} with leaves along $d\tau$. 
\begin{definition}
    A \textbf{transverse holomorphic foliation} on a 3-manifold $M^3$ is a smooth atlas with local charts of the form $(\tau,z,\bar z)\in \R\times\bbC$ on $M^3$, such that coordinate transitions take the form
\begin{equation*}
    (\tau,z,\bar z)\mapsto (\tau'(\tau,z,\bar z),z'(z),\bar z'(\bar z)).
\end{equation*}
$(z,\bar z)$ can be treated as local complex coordinates on the "spatial" slice $\Sigma\subset M^3$. 
\end{definition}
The product $M^3 = \R\times\Sigma$ is an example of a 3-manifold with THF structure. More generally, the THF structures are characterized by resolutions $\mathcal{A}=R\Gamma(\mathbb{R}\text{av})$ of the global sections on the \textit{raviolo} $\mathbb{R}\text{av}$ \cite{Kamnitzer:2022,Garner:2023zqn}, for which an analogous version of the affine Kac-Moody 2-algebra $\widehat{\mathcal{A}_s\g}$ can be constructed. 

\begin{remark}
    More precisely, the 2-Lax equation taking values in $\widehat{\Sigma_s\g}$ gives rise to 2-connections $(A,B)$ which are 2-flat (ie. zero 3-form 2-curvature $d_AB=0$), but is allowed to be non-fake-flat $(F_A-tB)\mid_\Sigma\neq0$ along the spatial slice $\Sigma$. This is because the 2-Lax equation \eqref{2lax} only governs the differential constraints involving "time", and therefore does not capture those higher flatness conditions that are parallel to $\Sigma$. 
\end{remark}

The same computations as above would then cast the 2-Lax equations \eqref{2grlax}, where the Lax pair $(L,P)$ may be \textit{split} Dolbeault-de Rham forms (see \S \ref{splitdeRham}), as certain topological-holomorphic differential constraints on the tuple $(A,B)$.

\medskip

The above observation is a higher homotopy analogue of ordinary Lax integrability (see \S \ref{wzw}): it allows one to cast 2-Lax pairs as defined in \eqref{2grlax} as 2-flat $G$-connections \cite{Baez:2004in,Wockel2008Principal2A}. In the following section, we will explicitly demonstrate the appearance of such a 2-Lax pair from a 3d field theory studied in \cite{Chen:2024axr}.

\subsection{2-Lax formulation of a 3d topological-holomorphic field theory}\label{2laxintheift}
Given a {balanced} Lie 2-algebra $(\g=\g_{-1}\xrightarrow{t}\g_0,\langle-,-\rangle)$ and its corresponding Lie 2-group $G=G_{-1}\xrightarrow{\textbf{t}} G_0$, one of the authors has constructed in \cite{Chen:2024axr} a 3d field theory $\mathcal{W}$ on a foliated 3d manifold $M^3$. The goal of this section is to show how this field theory fits in our proposed 2-lax formulation. 

\begin{remark}\label{integrableemphasis}
    It is worth noting here that the 3d field theory $\mathcal{W}$ was obtained from a higher-dimensional generalization (see \cite{Chen:2024axr,Schenkel:2024dcd}) of the 4d-2d localization of Costello-Yamazaki \cite{Costello:2019tri}. As such, following from its vvery construction, its integrability is expected but not yet formalized. This paper serves as the starting point for such a higher-dimensional integrability framework.
\end{remark}

To set up the action $\mathcal{W}$, consider an oriented 3-manifold $M^3$ equipped with a THF. For the purposes of dynamical analysis, we will assume $M^3=\Sigma\times \R$ splits along the THF (cf. pg. 7 of \cite{Garner:2023zqn}), where $\Sigma$ is a compact Riemann surface. We then have the following decomposition
\begin{equation*}
    \Omega^\bullet(M^3)\cong \Omega^\bullet(\Sigma)\otimes \Omega^\bullet(\R),\qquad d = d'+ d_\tau
\end{equation*}
of the de Rham complex. Note here that the differential also decomposes, for which $d'=\partial+\bar\partial$ is the exterior derivative on $\Sigma$. 

Locally, we consider a \textbf{chirality vector field} which locally takes the form $\partial_\ell$, which can point towards either one of the THF coordinates $\ell\in\{z,z',\tau\}$. Let $d_\ell$ denote the corresponding differential, then the action is given by
\begin{equation}
\mathcal{W}[g,\Theta\mid\ell] = -\int_{M^3} \langle d_\ell (d gg^{-1}), \Theta^g\rangle - \frac{1}{2}\langle d_\ell \Theta^g , t(\Theta^g)\rangle,\label{covariantaction}
\end{equation}
which takes as input the fields $(g,\Theta)\in C^\infty(M^3)\otimes G_0\oplus \Omega^1(M^3)\otimes \g_{-1}$ valued in the (derived \cite{Zucchini:2021bnn}) 2-group $G$ (here $\Theta^g= g\rhd \Theta$). Notice that the component $\Theta_\ell$ of the field $\Theta$ along the chirality $\ell$ is absent!

The equations of motion (EOMs) of the theory are equivalent to the fake- and 2-flatness condition
\begin{align}
    d_\ell \Lambda &=0,\qquad d_\ell H =0,\nonumber\\
    d_\perp \bar\Lambda &= t\bar H,\qquad d_\perp \bar H =0\label{eoms}
\end{align}
living in $\Omega^1(M^3) \otimes\g_0 \oplus \Omega^2(M^3)\otimes\g_{-1}$, for the higher-form currents
\begin{align}
    j& = (\Lambda, H)  = \left(-\big(d_\perp gg^{-1}+t\Theta^g\big)\,,\, g\rhd (d_\perp \Theta - \frac{1}{2}[\Theta,\Theta])\right) \nonumber\\ 
   \bar\jmath &= (\bar\Lambda, \bar H)  =  \left(g^{-1}d_\ell g\,,\, -\big(d_\ell \Theta + [g^{-1}d_\ell g,\Theta]\big)\right)\label{highercurrents}
\end{align}
defined in terms of the fields $(g,\Theta)$ of the theory.

\subsubsection{Split de Rham complexes}\label{splitdeRham}
From here on, we will specify for simplicity that the chirality vector field in fact aligns along the leaves of the THF --- namely, $\ell=\tau$. Define the following split complexes
\begin{align}
    \mathcal{B}^{\bullet,0} = \Omega^\bullet(\Sigma)\otimes \Omega^0(\R), \qquad \mathcal{B}^{\bullet,1} = \Omega^\bullet(\Sigma)\otimes \Omega^1(\R),\label{splitcomplexes}
\end{align}
each of which comes equipped with the differential $d_\perp =d$ given by the exterior derivative on $\Sigma$,
\begin{equation*}
    d_\perp: \mathcal{B}^{\bullet,q}\rightarrow \mathcal{B}^{\bullet+1,q},\qquad q=0,1.
\end{equation*}
Notice that, by adding back in the derivative $d_\tau$ along the $\R$ direction, $$\mathcal{B}^{\bullet,q}=\Big((\Omega^\bullet(\Sigma)\otimes \Omega^q(\R),d_\perp),d_\tau\Big)\cong \Omega^\bullet(\Sigma\times\R)\,,$$ we recover the full de Rham complex on $\R\subset M^3=\Sigma\times\R$.

With this choice, it is then clear that the higher-currents $j,\bar\jmath$ \eqref{highercurrents} split in accordance with the complexes $\mathcal{B}^{\bullet,q}$
\begin{align*}
    &\Lambda\in\mathcal{B}^{1,0}\otimes\g_0,\quad H\in\mathcal{B}^{2,0}\otimes\g_{-1},\\
    &\bar\Lambda \in \mathcal{B}^{0,1}\otimes\g_0,\quad \bar H\in \mathcal{B}^{1,1}\otimes\g_{-1},
\end{align*}
or that, equivalently, these higher-currents live 
\begin{equation*}
    j=(\Lambda,H) \in \big(\mathcal{B}^{\bullet,0}\otimes\g\big)_1,\qquad \bar\jmath=(\bar\Lambda,\bar H) \in \big(\mathcal{B}^{\bullet,1}\otimes\g\big)_1
\end{equation*}
in the complexes $\mathcal{B}^{\bullet,q}\otimes\g$ with total degree one. This formulation drastically simplifies their EOMs \eqref{eoms},
\begin{equation*}
    d_\ell j =0,\qquad \hat d_\perp \bar\jmath = 0,
\end{equation*}
where $\hat d_\perp=d_\perp-t$ is the differential on $\mathcal{B}^{\bullet,1}\otimes\g$.

\begin{remark}\label{raviolo}
    We emphasize that the chirality $\partial_\ell$ is in general independent from the THF on $M^3$. We have aligned them $\partial_\ell=\partial_\tau$ here to fit with the example construction of \S \ref{zero2curv}, but it is possible to misalign them. If we make the holomorphic choice $\partial_\ell= \partial_z$, for instance, then the higher currents $j,\bar\jmath$ \eqref{highercurrents} of the theory $\mathcal{W}$ would live in a \textit{differently-split} dgca $\mathcal{A}^{\bullet,q}$, where $q=0,1$ now counts the  $dz$ legs. These mixed Dolbeault-de Rham complexes $\mathcal{A}^{\bullet,q}$ turn out to resolve global sections on the \textit{raviolo} $\mathbb{R}\operatorname{av}$ \cite{Garner:2023zqn,alfonsi2024raviolo}, and the corresponding Kac-Moody 2-algebra $\widehat{\mathcal{A}^{\bullet,q}_s\g}$ (or more precisely its cohomology $H(\widehat{\mathcal{A}^{\bullet,q}_s\g})$) is studied in \cite{Chen:2025ujx}.
\end{remark}

By specifying the orientation of $M^3=\Sigma\times \R$, the 3d Hodge star satisfies
\begin{equation*}
    \operatorname{vol}_\Sigma = \star_3 d\tau
\end{equation*}
with respect to the volume form $\operatorname{vol}_\Sigma$ on $\Sigma$. This then defines a local linear isomorphism
\begin{equation*}
    \star_3: \mathcal{B}^{p,1}\rightarrow \mathcal{B}^{2-p,0}\,,\qquad p=1,2
\end{equation*}
which allows us to view the higher-currents $\bar\jmath = (\bar\Lambda,\bar H)$ as forms on $\Sigma$,
\begin{equation*}
    \star_3\bar\Lambda \in\Omega^2(\Sigma,\g_0),\qquad \star_3\bar H\in\Omega^1(\Sigma,\g_{-1}).
\end{equation*}
In the following, we will work exclusively on $\Sigma$ and denote by $\tilde\Lambda=\star_3\bar\Lambda\,,\tilde H=\star_3\bar H$.

\subsubsection{2-Lax pairs corresponding to the 2-flat currents} Recall the chirality is aligned with the THF, $\partial_\ell=\partial_\tau$. Through the 3d Hodge star operator $\star_3$ on $M^3=\Sigma\times \R$, we have 
\begin{align*}
    &\Lambda\in \Omega^1(\Sigma,\g_0),\qquad H\in\Omega^2(\Sigma,\g_{-1}),\\
    &\tilde\Lambda\in\Omega^2(\Sigma,\g_0),\qquad \tilde H\in\Omega^1(\Sigma,\g_{-1})\,.
\end{align*}
Notice that these are precisely the graded components of the current dg algebra $\Omega^\bullet(\Sigma,\g)$ defined in \S \ref{zero2curv}, with $j \in \Sigma\g^-$ in the odd subsector and $\bar\jmath\in\Sigma\g$ in the even one.

\medskip

From the 2-flatness equations of motion for $(j,\bar\jmath)$, the correspondence described in \S \ref{2laxzero} then suggests the following 2-Lax pairs
\begin{align}
    &\bar L =  (0,\Lambda,H)\oplus1,\qquad \bar P=0\oplus 0  & \in \Sigma\g^-\oplus s_-\mathbb{C}^{1|1}[1]\nonumber\\
    &  L = 0\oplus 1,\qquad  P = (0,\tilde H,\tilde\Lambda)\oplus 0 &\in \Sigma\g^+\oplus s_+\mathbb{C}^{1|1}[1]
    ,\label{ift2lax}
\end{align}
as written in the form of \eqref{laxpres}. Indeed, consistent with \S \ref{2laxzero}, the following two 2-Lax equations
\begin{equation}
    \partial_\tau \bar L = 0,\qquad 0=\operatorname{ad}_{P}^*(0\oplus 1) \label{ift2laxeq}
\end{equation}
can be seen to be equivalent to the EOMs \eqref{eoms} of the theory $\mathcal{W}$. 

There is then a direct correspondence between solutions of \eqref{eoms} with 2-Lax pairs of the form \eqref{ift2lax} valued in the coisotropic subcomplexes \eqref{coisotropic} of the Kac-Moody 2-algebra $\widehat{\Sigma_s\g}$. This is in direct analogy with the 2d Wess-Zumino-Witten model case described in \S \ref{wzw}.

\medskip

Turning to the  symmetries of the theory itself we will now prove (\textbf{Theorem \ref{2kmsymm}}) that the global symmetries of the theory $\cal W$ \eqref{covariantaction} are governed by the Kac-Moody current 2-algebra, analogous to the global Kac-Moody symmetry of the 2d Wess-Zumino-Witten model $W$ (see \cite{KNIZHNIK198483} and \S \ref{wzw}).

\subsection{Higher Kac-Moody symmetry charges}\label{momentmap}
We begin by briefly recalling the Noether analysis of $\cal W$ performed in \cite{Chen:2024axr}. The global left- and right-acting infinitesimal symmetries are given by
\begin{equation*}
    \delta_{(\alpha,\Gamma)} (g,\Theta)= (g.\alpha,-[\alpha,\Theta]+\Gamma),\qquad \delta_{(\bar \alpha,\bar\Gamma)}(g,\Theta) = (-\bar \alpha.g,-g\rhd \bar\Gamma),
\end{equation*}
where the symmetry parameters
\begin{align*}
   (\alpha,\Gamma),(\bar\alpha,\bar\Gamma) \in \big(\mathcal{B}^{0,0}\otimes\g_0\big)\oplus \big(\mathcal{B}^{1,0}\otimes\g_{-1}\big) = \big(\mathcal{B}^{\bullet,0}\otimes\g\big)_0
\end{align*}
must satisfy
\begin{align}
    & d_\tau\alpha =0,\qquad d_\tau \Gamma =0 \nonumber \\ 
    & d_\perp\bar\alpha + t\bar\Gamma = 0,\qquad d_\perp\bar\Gamma = 0.\label{2gtbcs}
\end{align}
Put the local charge aspects of the theory (they are \textit{functions on field space} in the covariant phase space formalism \cite{Geiller:2020edh})
\begin{equation*}
    Q_{(\alpha,\Gamma)} = \langle \alpha,H\rangle + \langle \Gamma,\Lambda\rangle\in\cB^{2,0} ,\qquad \bar Q_{(\bar\alpha,\bar\Gamma)}=\langle \bar\alpha,\bar H\rangle + \langle\bar\Gamma,\bar \Lambda\rangle\in\cB^{1,1},
\end{equation*}
it was shown in \cite{Chen:2024axr} that ---  on-shell of the EOM \eqref{eoms} and the constraints \eqref{2gtbcs} --- they are conserved $dQ=d\bar Q=0$ and satisfy the following graded Poisson bracket relatinos
\begin{align}
    [Q_{(\alpha,\Gamma)},\bar Q_{(\bar \alpha,\bar\Gamma)}] &=0\nonumber\\
    [Q_{(\alpha,\Gamma)},Q_{(\alpha',\Gamma')}] &= -Q_{([\alpha',\alpha],[\alpha',\Gamma]+ [\Gamma',\alpha])} +\langle \Gamma,d_\perp\alpha'+t\Gamma'\rangle + \langle \alpha,d_\perp\Gamma'\rangle \label{noether1} \\
     [\bar Q_{(\bar\alpha,\bar\Gamma)},\bar Q_{(\bar\alpha',\bar\Gamma')}]&= -\bar Q_{([\bar \alpha',\bar \alpha],[\bar \alpha',\bar \Gamma]+ [\bar \Gamma',\bar \alpha])} - \langle \bar \Gamma,d_\perp\bar \alpha'\rangle - \langle \bar \alpha,d_\perp\bar \Gamma'\rangle.\label{noether2}
\end{align}
In the following, we will focus on the \textit{chiral} sector $Q$ of the theory $\mathcal{W}$, corresopnding to the left-acting global symmetries $(\alpha,\Gamma).$ This sector is the 3d analogue of the \textit{holomorphic} left-acting symmetries in the Wess-Zumino-Witten model.

Denote by $\D,{\bar\D}\subset (\mathcal{B}^{\bullet,0}\otimes\g)_0$ the space of solutions to \eqref{2gtbcs} with graded Lie algebra structures \eqref{noether1}, \eqref{noether2}. 
\begin{theorem}\label{2kmsymm}
    Suppose $\partial_\ell=\partial_\tau$ for the theory $\cal W$, and let $\D$ denote the space of symmetry charges $Q$ satisfying \eqref{noether1}, then there is a dg Lie algebra map $$\D \rightarrow \widehat{\Sigma_s\g}^+$$ identifying a global 2-Kac-Moody symmetry of $\cal W$.
\end{theorem}
\begin{proof}
The requisite map is defined by
\begin{equation*}
    (\alpha,\Gamma)\mapsto (\alpha,\Gamma,0)\in\Sigma\g^+.
\end{equation*}
This map is evidently dg linear, and we seek to recover the graded Lie algebra structure \eqref{noether1} of the charges from the structure of the positive coisotropy complex $\widehat{\Sigma_s\g}^+$ of the Kac-Moody 2-algebra.

Recall from \eqref{ift2lax} that $(0,\Lambda,H)\in \Sigma\g^-$. Notice that, upon integrating over $\Sigma$ the Noether charge is nothing but the contraction with \eqref{2kmpairing},
\begin{equation}
    \int_\Sigma Q_{(\alpha,\Gamma)} = \llparenthesis (\alpha,\Gamma,0)\,|\, (0,\Lambda,H)\rrparenthesis.
    \label{chargecontraction}
\end{equation}
Putting $\chi=(\alpha',\Gamma',0),\,\eta=(\alpha,\Gamma,0)$
, we have from the dg Lie brackets \eqref{bracs}
\begin{align*}
     \llparenthesis [\chi,\eta]\,|\, (0,\Lambda,H)\oplus 1\rrparenthesis &= \llparenthesis [([\alpha',\alpha],[\Gamma',\alpha]+[\alpha',\Gamma],0)]\,|\, (0,\Lambda,H)\oplus 1\rrparenthesis\\
     &= \int_\Sigma \langle [\alpha',\alpha],H\rangle + \langle [\Gamma',\alpha]+ [\alpha',\Gamma],\Lambda\rangle + \langle \alpha',d\Gamma\rangle + \langle \Gamma',d\alpha\rangle - \langle \Gamma',t\Gamma\rangle \\
     &\overset{\eqref{chargecontraction}}{=} \int_\Sigma Q_{[(\alpha',\Gamma'),(\alpha,\Gamma)]}- \int_\Sigma \langle \alpha,d\Gamma'\rangle + \langle \Gamma,d\alpha'+ t\Gamma'\rangle \\ 
\end{align*}
which reproduces for us precisely the integrated version of the graded charge algebra \eqref{noether1}. Whence, under the pairing form \eqref{2kmpairing}, we indeed have
\begin{equation*}
    [\chi,\eta] = -[Q_{(\alpha,\Gamma)}, Q_{(\alpha,\Gamma)}]
\end{equation*}
as desired.

\end{proof}
The above result proves the statement that was expected in the previous work \cite{Chen:2024axr}: the chiral charges $Q$ of the theory $\mathcal{W}$ give rise to (infinitely many) conserved higher-dimensional monodromy operators labeled by dg representations of the subalgebra $\widehat{\Sigma_s\g}^+$.

\medskip

Here we briefly make some concluding comments. The \textit{shifted tangent complex} $M = T[1]\Sigma$ can be understood as a dg manifold with the sheaf $\mathcal{O}_M = \Omega^\bullet(\Sigma)$ given by the de Rham cochain complex on $\Sigma$. As such, the Kac-Moody 2-algebra $\widehat{\Sigma_s\g}$ can be thought of as a dg central extension of the $\g$-valued functions $\Sigma\g = \Omega^\bullet(\Sigma,\g) = C(M)\otimes\g$ on $M=T[1]\Sigma$. 

Through this perspective, one can then play the same game with a different choice of the dg manifold $(M,\mathcal{O}_M)$, which leads to {different} Kac-Moody 2-algebras $\widehat{\mathcal{A}_s\g}$, where $\mathcal{A}=\mathcal{O}_M$ is the given dgca (cf. the beginning of \S \ref{zero2curv}). Indeed, choosing the \textit{raviolo} $M=\mathbb{R}\operatorname{av}$ leads to the raviolo vertex algebra \cite{Garner:2023zqn,alfonsi2024raviolo,Chen:2025ujx}, while the choice  $\mathcal{O}_M=R\Gamma_{hol}(D^\times)$ (the dgca resolving the sheaf of holomorphic sections on the formal punctured $n$-disc $D^\times\subset \mathbb{C}^n$) would relate to the derived current algebras of \cite{FAONTE2019389,Alfonsi_2023}.

\section{Conclusion}
The notion of 2-graded integrability, inherited from an underlying Poisson 2-algebra, can be interpreted as a categorification (in the sense described in the introduction) of the usual notion of integrability for 1d systems to 2-dimensions. This paper sought to make this understanding more precise, with the construction of the Kac-Moody 2-algebra in \S \ref{zero2curv}, as well as an application of its structures to a 3-dimensional field theory which hosts 2-flat higher Lax connections.

By generalizing Lax integrability from Lie 1-bialgebras \cite{Lax:1968fm} (and following \cite{Meusburger:2021cxe}), an explicit formula \eqref{2lax} for the dg Lax operators $L,P$ from a solution of the 2-graded CYBE \cite{Bai_2013,Chen:2012gz,chen:2022} was given. Importantly, it was found that $L,P$ have an \textit{even-odd splitting} in the underlying degrees, which turned out to be crucial for the reformulation of 2-Lax equations in terms of 2-flatness conditions in \S \ref{2laxzero}.

The authors expect many aspects of our work to be related to previous works on derived generalizations of the Kac-Moody algebra in the literature \cite{Kapranov2021InfinitedimensionalL,FAONTE2019389}. Given the close relation between the Kac-Moody 2-algebra $\widehat{\Sigma_s\g}$ and higher Lax pairs, our work suggests an interesting prospect of understanding higher-dimensional field theories through the lens of dg integrability in the context of derived geometry.

\medskip

As also mentioned briefly in \S \ref{2grpois2alg}, it would also be interesting to generalize our higher Lax integrability to the case of $L_\infty$-algebras and quasi-Lie 2-bialgebras \cite{Chen:2013,Chen:2012gz}. Specifically, applications to the weak string 2-algebra \cite{Schommer_Pries_2011,Waldorf2012ACO,Baez:2005sn,Cirio:2012be} would be desirable.

\newpage

\printbibliography

\end{document}